\newtheorem{rrule}{Reduction Rule}
\DeclareMathOperator{\nd}{nd}
\newif\iflong\longtrue
\newcommand{\prob}[6]{%
  % \needspace{3\baselineskip}
  \begin{quote}
  	\begin{samepage}
    \begin{labeling}{#6}%
      \setlength\topsep{-.6ex} \setlength\itemsep{-.2ex}
    \item[#1]
      % \textit{#2} #3\\
      % \textit{#4} #5
    \item[\emph{#2}]#3
    \item[\emph{#4}]#5
    \end{labeling}%
	\end{samepage}
  \end{quote}%
}
\newcommand{\probdefdec}[3]{\prob{#1}{Input:}{#2}{Question:}{#3}{as}}
\newcommand{\probname}{\textsc{$c$-Closed Vertex Deletion}}
\newcommand{\probshort}{\textsc{$c$-CVD}}
\newcommand{\degree}{\ensuremath{\textsf{deg}}}
\newcommand{\Oh}{\mathcal{O}}
\title{A Complexity Analysis of the c-Closed~Vertex~Deletion~Problem}
\author{Lisa Lehner, Christian Komusiewicz, Luca Pascal Staus}
\institute{Friedrich Schiller University Jena, Jena, Germany\\\email{\{c.komusiewicz,luca.staus\}@uni-jena.de}}
\begin{document}

\maketitle

\begin{abstract}
A graph is~$c$-closed when every pair of nonadjacent vertices has at most~$c-1$ common neighbors. In \probname, the input is a graph~$G$ and an integer~$k$ and we ask whether~$G$ can be transformed into a~$c$-closed graph by deleting at most~$k$ vertices. We study the classic and parameterized complexity of \probname. We obtain, for example, NP-hardness for the case that~$G$ is bipartite with bounded maximum degree. We also show upper and lower bounds on the size of problem kernels for the parameter~$k$ and introduce a new parameter, the number~$x$ of vertices in \emph{bad pairs}, for which we show a problem kernel of size~$\Oh(x^3 + x^2\cdot c))$. Here, a pair of nonadjacent vertices is bad if they have at least~$c$ common neighbors. Finally, we show that \probname~can be solved in polynomial time on unit interval graphs with depth at most~$c+1$ and that it is fixed-parameter tractable with respect to the neighborhood diversity of~$G$.    
\end{abstract}

\section{Introduction}
One of the most striking features of social networks is that they tend
to show a high degree of \emph{triadic closure}. This is the
phenomenon that two agents which both interact with the same third
agent are likely to interact directly, too. \iflong There are different ways to
quantify this observation. For example, one can observe that
social networks have high clustering coefficients, meaning that a
large fraction of all connected vertex triples are triangles.
A further formal definition capturing triadic closure is graph~\emph{closure}, introduced by Fox et al.~\cite{FRSWW20}.
\else
  A formal way of quantifying this observation is graph~\emph{closure}, introduced by Fox et al.~\cite{FRSWW20}.
\fi
Here, a network is called~\emph{$c$-closed} if all pairs of vertices with at least~$c$ common neighbors are adjacent and the closure of a network is the smallest number~$c$ such that the network is~$c$-closed. The hypothesis driven by the triadic closure phenomenon is now that networks should have a relatively small closure number and this is indeed the case for real-world data~\cite{FRSWW20,KNSS25}. Apart from giving further evidence to confirm theories about the mechanisms that underly the creation of real-world networks, this observation has consequences for the development of algorithms for real-world networks: Many hard computational problems are fixed-parameter tractable when parameterized by~$c$ or by combinations of~$c$ and other parameters. Informally, this means that these problems can be solved (more) efficiently when the input networks have a small closure number.

For example, the enumeration of maximal cliques is fixed-parameter tractable with respect to the closure number~$c$~\cite{FRSWW20}. More precisely, all maximal cliques can be enumerated in~$3^{c/3}\cdot n^{\Oh(1)}$ time, where~$n$ denotes the number of vertices in the graph. This result has been extended to the enumeration of different types of near-cliques~\cite{FPT22,KKS20a}. Further algorithmic applications of $c$-closure were given for variants of \textsc{Dominating Set}~\cite{KMRSS23,KKS20,KKS20a},~\textsc{Independent Set}~\cite{KKS20,KKS20a}, and~\textsc{Vertex Cover}~\cite{KKNS22,KKS23} and for problems related to finding certain small subgraphs~\cite{KN21}.

A drawback of $c$-closure is that it is defined as the \emph{maximum} of the sizes of common neighborhoods of nonadjacent vertices (plus one) which makes it sensitive to local changes: all it takes to drastically increase the $c$-closure of a graph is the absence of one edge between two vertices with many common neighbors. Ideally, we would like to obtain more robust positive algorithmic results that hold not only for~$c$-closed graphs but also for those graphs which are \emph{almost}~$c$-closed. We follow the natural way of quantifying this by counting the number~$k$ of vertex deletions that are necessary to obtain a $c$-closed graph. Using simple arguments, one can show that positive results for \textsc{Independent Set} and \textsc{Clique} can be extended from $c$-closed graphs to almost $c$-closed graphs. 
\begin{proposition}
  Let~$G=(V,E)$ be a graph and~$S$ a size-$k$ vertex set in~$G$ such that~$G-S$ is~$c$-closed. Then, we can
 (1) enumerate all maximal cliques of~$G$ in~$2^k\cdot (3^{c/3})\cdot n^{\Oh(1)}$ time, and
 (2) determine in~$2^k\cdot f(\ell+c)\cdot n^{\Oh(1)}$ time whether~$G$ contains an independent set of size at least~$\ell$.
  \label{prop:application}
\end{proposition}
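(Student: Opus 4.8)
The plan is to exploit the fact that both ingredients for $c$-closed graphs---the maximal-clique enumeration of Fox et al.~\cite{FRSWW20} and the independent-set algorithm of~\cite{KKS20,KKS20a}---apply to \emph{every induced subgraph} of a $c$-closed graph, since deleting vertices can only shrink common neighborhoods and hence never raises the closure. Thus, because~$G-S$ is $c$-closed, so is each of its induced subgraphs. On top of this I would branch over the~$2^k$ subsets of the deletion set~$S$: any clique or independent set~$X$ of~$G$ is determined by its trace~$A:=X\cap S$ together with the part~$X\setminus A$ lying in the $c$-closed graph~$G-S$, so it suffices to handle each possible trace~$A$ separately and then recombine.

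For part~(1), I would iterate over all~$A\subseteq S$ that induce a clique in~$G$. For a fixed such~$A$, every clique~$C$ of~$G$ with~$C\cap S=A$ has the property that~$C\setminus A$ is a clique of the induced subgraph~$H_A:=(G-S)[W_A]$, where~$W_A$ is the set of vertices of~$V\setminus S$ adjacent to all of~$A$. As~$H_A$ is $c$-closed, I enumerate its maximal cliques in~$3^{c/3}\cdot n^{\Oh(1)}$ time and emit each set~$A\cup Q$ as a candidate. The key correctness step is a maximality argument: if~$C$ is a maximal clique of~$G$, then~$C\setminus A$ must be a \emph{maximal} clique of~$H_A$, for otherwise a vertex~$u\in W_A$ extending~$C\setminus A$ would also be adjacent to all of~$A$, contradicting maximality of~$C$ in~$G$; hence every maximal clique of~$G$ occurs among the candidates. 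Conversely, a candidate~$A\cup Q$ need not be maximal in~$G$, so I would add a polynomial-time check that discards non-maximal candidates together with a deduplication step (e.g.\ by sorting). Over all~$2^k$ traces this yields the claimed~$2^k\cdot 3^{c/3}\cdot n^{\Oh(1)}$ bound.

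For part~(2), I would likewise branch over all~$A\subseteq S$ that are independent in~$G$, and for each such~$A$ consider the induced subgraph~$G_A:=(G-S)[(V\setminus S)\setminus N(A)]$ obtained from~$G-S$ by deleting every vertex with a neighbor in~$A$. An independent set~$I$ of~$G$ with~$I\cap S=A$ is exactly~$A$ together with an independent set of~$G_A$, so~$G$ has an independent set of size at least~$\ell$ if and only if some~$G_A$ has one of size at least~$\ell-|A|$. Since~$G_A$ is $c$-closed, I apply the algorithm of~\cite{KKS20,KKS20a} with target size~$\ell-|A|\le\ell$, costing~$f(\ell+c)\cdot n^{\Oh(1)}$ per branch and~$2^k\cdot f(\ell+c)\cdot n^{\Oh(1)}$ in total. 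I expect the only genuinely delicate point to be the maximal-clique case, namely verifying that the recombination captures exactly the maximal cliques of~$G$---missing none and, after the maximality filter, emitting no duplicates---whereas the independent-set case reduces to a routine existence argument.
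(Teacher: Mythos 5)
Your proposal is correct and follows essentially the same route as the paper: branch over the $2^k$ traces $A = X\cap S$, restrict to the induced subgraph of the $c$-closed graph $G-S$ consisting of vertices compatible with $A$, run the known $c$-closed algorithms there, and (for cliques) filter candidates by a polynomial-time maximality check. Your explicit maximality argument for why every maximal clique of $G$ is captured is the same justification the paper gives, only spelled out in slightly more detail.
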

\begin{proof}
  To determine whether~$G$ contains an independent set~$I$ on~$\ell$ vertices, we use the following known algorithm which can be formulated for vertex deletion sets~$S$ to any hereditary graph class for which independent sets can be found efficiently. Branch into all cases for~$S'\coloneqq S\cap I$. In each branch,  first check whether~$S'$ is an independent set. If this is the case, then delete all vertices from~$G-S$ that are adjacent to~$S'$ (as they cannot be contained in an independent set that includes~$S'$) and determine whether the resulting graph has an independent set of size~$\ell-|S'|$. Since the resulting graph is $c$-closed, this can be done in~$f(\ell+c)\cdot n^{\Oh(1)}$ time~\cite{KKS20}. Since~$S$ has size~$k$, we consider altogether~$2^k$ cases for~$S'$ which gives the claimed running time.  

To enumerate all maximal cliques of~$G$ proceed as follows. For each subset~$S'$ of~$S$, enumerate all maximal cliques~$K$ such that~$K\cap S = S'$. To do this, delete from~$G-S$ all vertices that are not adjacent to all vertices of~$S'$. Then enumerate all maximal cliques in the resulting graph. Since this graph is~$c$-closed, this enumeration can be done in~$3^{c/3}\cdot n^{\Oh(1)}$ time. For each enumerated clique~$K'$ we check in polynomial time whether~$S'\cup K'$ is a maximal clique in~$G$ and output~$S'\cup K'$ if this is the case. The running time bound now follows from the fact that~$2^k$ cases for~$S'$ are considered. The correctness follows from the fact that only maximal cliques are output and that each maximal clique~$K$ is output when the branch~$S'=K\cap S$ is considered.  \qed
\end{proof}

Hence, it may be worthwhile to find out whether a graph~$G$ is almost~$c$-closed and to identify the respective vertex deletion set~$S$ in that case.
Motivated by this, we investigate  the complexity of the corresponding decision problem.
\probdefdec
{\probname\ (\probshort)}
{An undirected graph~$G = (V,E)$ and a positive integer~$k$.}
{Is there a set~$S \subseteq V$ with~$|S| \leq k$ such that~$G-S$ is~$c$-closed?}
Unless mentioned otherwise, throughout the paper we assume that $c$ is constant.
\paragraph{Known Results and Further Related Work.}
Since  being $c$-closed is hereditary, \probshort~is NP-hard for every fixed value of~$c$~\cite{LY80}. Assuming the exponential time hypothesis (ETH), one cannot even solve \probshort~in~$2^{o(n+m)}$ time, where~$m$ is the number of edges in the graph~\cite{K18}.
The case~$c=1$ is equivalent to \textsc{Cluster Vertex Deletion} since a graph is~1-closed if and only if it does not contain a~$P_3$, a path on three vertices, as induced subgraph.  \textsc{Cluster Vertex Deletion} is NP-hard even when restricted to graphs with maximum degree~$3$~\cite{Rusu25}. This gives the following complexity dichotomy: For maximum degree~$\Delta\le 2$, \textsc{1-CVD} is easily solvable in linear time, all other cases are NP-hard.

A further related parameter is the \emph{weak} closure number~$\gamma$.  This is the smallest number $\gamma$ such that every induced subgraph of~$G$ contains \emph{some} vertex~$v$ that has at most~$\gamma$~common~neighbors with each of its nonneighbors. The value of~$\gamma$ is upper-bounded by the closure number~$c$ and by the degeneracy of the input graph. As a consequence, it assumes very small values in real-world data~\cite{FRSWW20,KNSS25}. There are several problems where a small weak closure number~$\gamma$ can be exploited, for example \textsc{Clique Enumeration}~\cite{FRSWW20}, \textsc{Dominating Set}~\cite{LS21}, \textsc{Independent Set}~\cite{KKS20a}, \textsc{Induced Matching}, \textsc{Capacitated Vertex Cover} and \textsc{Connected Vertex Cover}~\cite{KKS23}. As we show in Section~\ref{sec:prelims}, the weak closure number~$\gamma$ of a graph is unrelated to~$k+c$ where~$k$ is defined as the vertex deletion distance to~$c$-closed graphs. More precisely, we show that there are graphs with unbounded~$\gamma$ value where~$k+c$ is constant and that there are graphs where~$\gamma$ is constant but~$k+c$ is unbounded.

\paragraph{Our Results.}

\begin{figure}[t]
\center
\scalebox{0.45}{
\begin{tikzpicture}

\draw[draw=none,fill=gray!10] (0.5,-0.5) rectangle ++ (6,-6); % deg background for open cases
\draw[draw=none,fill=green!50] (0.5,-0.5) rectangle ++ (2,-1.5); % deg 2 cvd
\draw[draw=none,fill=red!50] (2.5,-0.5) rectangle ++ (4,-1); % deg 3 cvd
\draw[draw=none,fill=green!50] (0.5,-1.5) rectangle ++ (1,-5); % deg 1
\draw[draw=none,fill=green!50] (1.5,-1.5) rectangle ++ (1,-5); % deg 2
\draw[draw=none,fill=green!50] (2.5,-2.5) rectangle ++ (1,-4); % deg 3
\draw[draw=none,fill=green!50] (3.5,-4.5) rectangle ++ (1,-2); % deg 4
\draw[draw=none,fill=green!50] (4.5,-5.5) rectangle ++ (1,-1); % deg 5
%\draw[draw=none,fill=green!50] (5.5,-6.5) rectangle ++ (1,-0); % deg 6

%\draw[draw=none,fill=red!50] (0.5,-1.5) rectangle ++ (1,-0); % deg 1
%\draw[draw=none,fill=red!50] (1.5,-1.5) rectangle ++ (1,-0); % deg 2
%\draw[draw=none,fill=red!50] (2.5,-1.5) rectangle ++ (1,-0); % deg 3
\draw[draw=none,fill=red!50] (3.5,-1.5) rectangle ++ (1,-1); % deg 4
\draw[draw=none,fill=red!50] (4.5,-1.5) rectangle ++ (1,-2); % deg 5
\draw[draw=none,fill=red!50] (5.5,-1.5) rectangle ++ (1,-5); % deg 6

\draw [line width=1.5pt] (-0.5,-0.5) -- (6.5,-0.5);
\draw [line width=1.5pt] (0.5,0.5) -- (0.5,-6.5);
\draw [line width=1.5pt] (-0.4,0.4) -- (0.5,-0.5);

\node[label={\Large $\Delta$}] at (0.2, -0.2){};
\node[label={\Large $c$}] at (-0.2, -0.6){};

\node[label={\Large $1$}] at (1, -0.4){};
\node[label={\Large $2$}] at (2, -0.4){};
\node[label={\Large $3$}] at (3, -0.4){};
\node[label={\Large $4$}] at (4, -0.4){};
\node[label={\Large $5$}] at (5, -0.4){};
\node[label={\Large $6$}] at (6, -0.4){};

\node[label={\Large $1$}] at (0, -1.4){};
\node[label={\Large $2$}] at (0, -2.4){};
\node[label={\Large $3$}] at (0, -3.4){};
\node[label={\Large $4$}] at (0, -4.4){};
\node[label={\Large $5$}] at (0, -5.4){};
\node[label={\Large $6$}] at (0, -6.4){};

\node[label={\Large NP-hard}] at (5, -2.4){};
\node[label={\Large P}] at (2, -4.4){};

\end{tikzpicture}
}
\caption{Hardness for different values of~$c$ and~$\Delta$. Red means NP-hard, green means polynomial-time solvable, and gray means unknown.}
\label{fig:max_deg_hardness_table}
\end{figure}
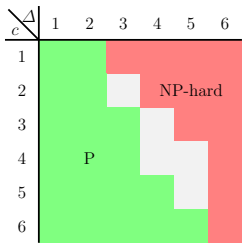
In Section~\ref{sec:degree}, we make progress towards a maximum degree-based complexity dichotomy for~\probshort; the results are shown in Fig.~\ref{fig:max_deg_hardness_table}. In particular, our results show that for all~$c\ge 6$ \probshort~is NP-hard when restricted to graphs with maximum degree~$\Delta\ge c$ and polynomial-time solvable when restricted to graphs with maximum degree~$\Delta\le c-1$. We also show that for~$c=\Delta=3$, the problem is polynomial-time solvable so the complexity landscape seems more complicated for smaller values of~$c$. Our results also show that the problem is NP-hard on bipartite graphs.

In Section~\ref{sec:kernel}, we then consider the parameterized complexity of the problem with respect to the natural parameter~$k$. We show that a kernel with~$\Oh(k^{c+2})$~vertices can be computed in time~$\Oh(n^2 m)$ and also provide a lower bound of~$\Oh(k^{c-\epsilon})$ for the bit size of the kernel.
To obtain kernels whose size depends only polynomially on~$c$, we introduce a further natural parameter, the number~$x$ of \emph{vertices in bad pairs} in~$G$, that is, the total number of vertices that have at least~$c$ common neighbors with some nonneighbor. This parameter can be thought of as being larger than~$k$ as deleting all vertices in a bad pair makes the graph~$c$-closed. We show that \probshort{} admits a kernel with $\Oh(x^3\cdot c)$~vertices.

Finally, in Section~\ref{sec:easy} we show that \probshort{} is fixed-parameter tractable with respect to the neighborhood diversity and that it can be solved in polynomial time on unit interval graphs with maximum clique size~$c+1$.

\section{Preliminaries}
\label{sec:prelims}
An \emph{undirected graph}~$G=(V,E)$ consists of a \emph{vertex set}~$V$ and an \emph{edge set}~$E\subseteq \{ \{ u,v \} \mid u,v\in V \wedge u \neq v \}$.
We let~$V(G)$ and~$E(G)$ denote the vertices and edges of~$G$, respectively, and define~$n \coloneq |V(G)|$ and~$m\coloneq |E(G)|$.
Two vertices~$u$ and~$v$ are \emph{neighbors} if~$\{u,v\}\in E(G)$.
The \emph{neighborhood} of a vertex~$v$ in~$G$ is~$N_G(v) \coloneq \{ u\in V(G) \mid \{ u,v \}\in E(G) \}$.
%The \emph{closed neighborhood} of~$v$ is~$N_G[v] \coloneq N_G(v) \cup \{ v \}$. 
The \emph{degree} of~$v$ is~$\degree_G(v) \coloneq |N_G(v)|$.
The \emph{maximum degree} of~$G$ is~$\Delta_G \coloneq \max_{v\in V(G)} \degree(v)$.
We drop the subscript~$G$ if~$G$ is clear from context.
A graph~$G'$ is a \emph{subgraph} of~$G$ if~$V(G')\subseteq V(G)$ and~$E(G')\subseteq E(G)$.
Let~$U\subseteq V(G)$.
The graph~$G[U] \coloneq (U, \{ e\in E(G) \mid e\subseteq U \} )$ is the \emph{subgraph of~$G$ induced by~$U$}.
We write~$G - U \coloneq G[V(G)\setminus U]$ to denote the subgraph of~$G$ obtained by deleting the vertices in~$U$.

Let~$G$ be a graph.
We call~$G$ a \emph{clique} if it contains every possible edge.
We call~$G$ an \emph{independent set} if it contains no edges.
We call~$G$ \emph{bipartite} if~$V(G)$ can be partitioned into two sets~$V_1$ and~$V_2$ such that~$G[V_1]$ and~$G[V_2]$ are independent sets.
We call~$G$ a \emph{split graph} if~$V(G)$ can be partitioned into two sets~$V_1$ and~$V_2$ such that~$G[V_1]$ is a clique and~$G[V_2]$ is an independent set.

A graph class is \emph{hereditary} if it is \emph{closed under vertex deletion}. In other words, any induced subgraph of a graph from this class is also in this class. All graph classes we defined above including~$c$-closed graphs are hereditary.~\cite{FRSWW20}

\begin{figure}[t]
    \centering
        \begin{tikzpicture}[scale=0.5, every node/.style={circle, fill, inner sep=2pt, minimum size=0pt, draw=none}]
        
			\newcommand{\shiftA}{6}
			\newcommand{\shiftB}{8 + \shiftA}
        
        	% FSG for c = 2
			\node[white, label=below:{\normalsize $c = 2$}] (1) at (1, 1.5){};
        	
            \node (WA) at (0,2) [circle,draw] {};
            \node (XA) at (2,2) [circle,draw] {};
            \node[fill=lightgray] (UA) at (0,4) [circle,draw] {};
            \node[fill=lightgray] (VA) at (2,4) [circle,draw] {};

            \draw (UA) -- (WA);
            \draw (UA) -- (XA);
            \draw (VA) -- (WA);
            \draw (VA) -- (XA);
            \draw[dashed] (WA) -- (XA);

            % FSG for c = 3
			\node[white, label=below:{\normalsize $c = 3$}] (1) at (2 + \shiftA, 1.5){};
            
            \node (WB) at (0 + \shiftA,2) [circle,draw] {};
            \node (XB) at (2 + \shiftA,2) [circle,draw] {};
            \node (YB) at (4 + \shiftA,2) [circle,draw] {};
            \node[fill=lightgray] (UB) at (1 + \shiftA,4) [circle,draw] {};
            \node[fill=lightgray] (VB) at (3 + \shiftA,4) [circle,draw] {};
            \draw (UB) -- (WB);
            \draw (UB) -- (XB);
            \draw (UB) -- (YB);
            \draw (VB) -- (WB);
            \draw (VB) -- (XB);
            \draw (VB) -- (YB);
            \draw[dashed] (WB) -- (XB);
            \draw[dashed] (XB) -- (YB);
            \draw[dashed] (WB) to[out=-45, in=-135] (YB);

            % FSG for c = 4            
			\node[white, label=below:{\normalsize $c = 4$}] (1) at (2.25 + \shiftB, 1.5){};
            
            \node (WC) at (0 + \shiftB,2) [circle,draw] {};
            \node (XC) at (1.5 + \shiftB,2) [circle,draw] {};
            \node (YC) at (3 + \shiftB,2) [circle,draw] {};
            \node (AC) at (4.5 + \shiftB,2) [circle,draw] {};
            \node[fill=lightgray] (UC) at (1 + \shiftB,4) [circle,draw] {};
            \node[fill=lightgray] (VC) at (3.5 + \shiftB,4) [circle,draw] {};

            \draw (UC) -- (WC);
            \draw (UC) -- (XC);
            \draw (UC) -- (YC);
            \draw (VC) -- (WC);
            \draw (VC) -- (XC);
            \draw (VC) -- (YC);
            \draw (VC) -- (AC);
            \draw (UC) -- (AC);
            \draw[dashed] (WC) -- (XC);
            \draw[dashed] (XC) -- (YC);
            \draw[dashed] (WC) to[out=-45, in=-135] (YC);
            \draw[dashed] (AC) to[out=-45, in=-135] (WC);
            \draw[dashed] (AC) -- (YC);
            \draw[dashed] (AC) to[out=-45, in=-135] (XC);

        \end{tikzpicture}
    \caption{Minimal forbidden subgraphs (FSG) for~$c = 2,3,4$. The grey vertices are the bad pair, the black vertices are the connecting vertices, the full edges are the critical edges, and the dashed edges are optional edges.}
    \label{fig:fsgs}
\end{figure}
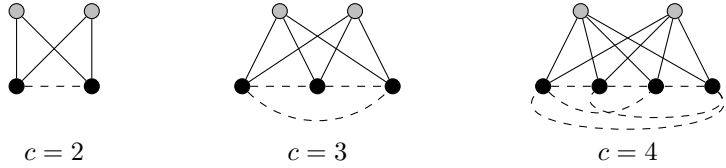

\paragraph{Problem-Specific Definitions.}

    A graph~$G = (V,E)$ is \emph{$c$-closed} for some positive integer~$c$ if every pair of nonadjacent vertices has at most~$c - 1$ common neighbors. The \emph{closure number} of $G$ is the smallest number~$c$ such that~$G$ is~$c$-closed.
% \end{definition}
We call a pair of nonadjacent vertices with at least~$c$ common neighbors a \emph{bad pair}.
The common neighbors of a bad pair are its \emph{connecting vertices}.
Clearly, a graph is~$c$-closed if and only if it contains no bad pair. In other words, the~$c$-closed property can be defined based on the following forbidden subgraphs:
A \emph{minimal forbidden subgraph} (FSG) for being~$c$-closed is a graph with~$c + 2$ vertices such that two of the vertices form a bad pair and the remaining~$c$ vertices are their connecting vertices. The adjacency between the connecting vertices can be arbitrary.
Given an FSG~$G'$, we refer to the edges between the bad pair and the connecting vertices as \emph{critical edges} of~$G'$. We call an edge~$e$ \emph{critical} in~$G$ if~$G$ contains an FSG where~$e$ is a critical edge.
Figure~\ref{fig:fsgs} shows the FSGs for~$c=2,3,4$.
Note that a bad pair can have more than~$c$ connecting vertices if it is the bad pair in multiple FSGs. Moreover, a minimal non-$c$-closed subgraph can have several bad pairs; for convenience, we let FSGs often refer to a combination of a subgraph and a bad pair.
\paragraph{Relation to Weak Closure.}    
The parameter combination~$c + k$ is unrelated to the closure number and the weak closure number~$\gamma$ of a graph for any fixed~$c$. We first show that~$c + k$ can be arbitrarily smaller than~$\gamma$. Let~$s\in \mathbb{N}$ and let~$G$ be a graph containing a clique with~$2s$~vertices and two additional vertices~$v_1$ and~$v_2$. The vertex~$v_1$ is connected to~$s$ of the clique vertices and~$v_2$ is connected to the other~$s$ clique vertices. Clearly, we can make this graph~$1$-closed by removing~$v_1$ and~$v_2$. This means we have~$c + k \leq c + 2$. Now if~$s$ is larger than~$c+2$, then~$\gamma\ge s + 1$ since each vertex has at least one nonneighbor with~$s$ common neighbors. 

Next, we show that~$c + k$ can be arbitrarily larger than the closure number of a graph. Let~$G$ be a graph with an independent set of size~$c$ and two vertices~$v_1$ and~$v_2$ that are adjacent to each vertex in the independent set. A graph consisting of~$s\gg c + 1$ copies of this component is~$c + 1$-closed. To make it~$c$-closed we need to remove one vertex from each of the~$s$ components.

For the relevant definitions of parameterized complexity, refer to~\cite{PA16}.
\section{Bounded Maximum Degree~$\mathbf{\Delta}$}
\label{sec:degree}
We now study the complexity of~$\probshort$ for different values of~$c$ and~$\Delta$. Figure~\ref{fig:max_deg_hardness_table} shows an overview of these results.
    
\subsection{NP-Hard Cases}
We reduce from \textsc{Vertex Cover} which is NP-hard when the maximum degree is~$3$ and all degree-$3$ vertices have distance at least~$3$ to each other~\cite{K18}.

\begin{figure}[t]
        \centering
        \begin{subfigure}[t]{0.49\textwidth}
            \centering
            \scalebox{0.7}{
            \begin{tikzpicture}[scale=1, every node/.style={circle, fill, inner sep=2pt, minimum size=0pt, draw=none}]
                \node (A) at (0,0) [circle,draw] {};
                \node (B) at (4,0) [circle,draw] {};
                \node (C) at (6,2) [circle,draw] {};
                \node (D) at (2,2) [circle,draw] {};
                \draw (A) -- (B);
                \draw (A) -- (D);
                \draw (B) -- (C);
                \draw (B) -- (D);
            \end{tikzpicture}
            }  
            \caption{\textsc{Vertex Cover} instance with~$\Delta = 3$.}
            \label{fig:maxdeg6a}
        \end{subfigure}
        \begin{subfigure}[t]{0.49\textwidth}
            \centering
            \scalebox{0.7}{
            \begin{tikzpicture}[scale=1, every node/.style={circle, fill, inner sep=2pt, minimum size=0pt, draw=none}]
                \node (A) at (0,0) [circle,draw] {};
                \node (B) at (4,0) [circle,draw] {};
                \node (C) at (6,2) [circle,draw] {};
                \node (D) at (2,2) [circle,draw] {};
                \node[fill=white] (AD1) at (0.5,1.3) [circle,draw] {};
                \node[fill=gray] (AD2) at (1.2,1.2) [circle,draw] {};
                \node[fill=white] (AD3) at (1.3,0.55) [circle,draw] {};

                \node[fill=white] (AB1) at (2,0.4) [circle,draw] {};
                \node[fill=gray] (AB2) at (2.3,0) [circle,draw] {};
                \node[fill=white] (AB3) at (2,-0.4) [circle,draw] {};

                \node[fill=white] (CB1) at (4.5,1.3) [circle,draw] {};
                \node[fill=gray] (CB2) at (5.2,1.2) [circle,draw] {};
                \node[fill=white] (CB3) at (5.3,0.55) [circle,draw] {};

                \node[fill=white] (DB1) at (3.3,1.3) [circle,draw] {};
                \node[fill=gray] (DB2) at (3.2,0.625) [circle,draw] {};
                \node[fill=white] (DB3) at (2.5,0.55) [circle,draw] {};

                \node[fill=gray] (AB4) at (1.7,0) [circle,draw] {};
                \node[fill=gray] (DB4) at (2.65,1.2) [circle,draw] {};
                \node[fill=gray] (CB4) at (4.65,0.65)  [circle,draw] {};
                \node[fill=gray] (AD4) at (0.65,0.65) [circle,draw] {};

                \draw (A) -- (AB1);
                \draw (AB1) -- (AB2);
                \draw (A) -- (AB3); 
                \draw (B) -- (AB1);
                \draw (AB3) -- (AB2);
                \draw (B) -- (AB3);
                \draw (A) -- (AD1);
                \draw (AD1) -- (AD2);
                \draw (A) -- (AD3); 
                \draw (D) -- (AD1);
                \draw (AD3) -- (AD2);
                \draw (D) -- (AD3);   
                \draw (D) -- (DB1);
                \draw (DB1) -- (DB2);
                \draw (D) -- (DB3); 
                \draw (B) -- (DB1);
                \draw (DB3) -- (DB2);
                \draw (B) -- (DB3); 
                \draw (C) -- (CB1);
                \draw (CB1) -- (CB2);
                \draw (C) -- (CB3); 
                \draw (B) -- (CB1);
                \draw (CB3) -- (CB2);
                \draw (B) -- (CB3); 
                \draw (AB1) -- (AB4);
                \draw (AB3) -- (AB4); 
                \draw (DB3) -- (DB4); 
                \draw (DB1) -- (DB4); 
                \draw (AD1) -- (AD4); 
                \draw (AD3) -- (AD4); 
                \draw (CB3) -- (CB4);
                \draw (CB1) -- (CB4);  
            \end{tikzpicture}
            }
            \caption{\textsc{$4$-CVD instance with~$\Delta = 6$.}}
            \label{fig:maxdeg6b}
        \end{subfigure}

        \caption{Construction of a \textsc{$4$-CVD} instance with maximum degree~$6$ from a \textsc{Vertex Cover} instance with maximum degree~$3$. The vertices from the \textsc{Vertex Cover} instance are black, the bad pair vertices are white and the additional connecting vertices are gray.}
        \label{fig:maxdeg6}
\end{figure}
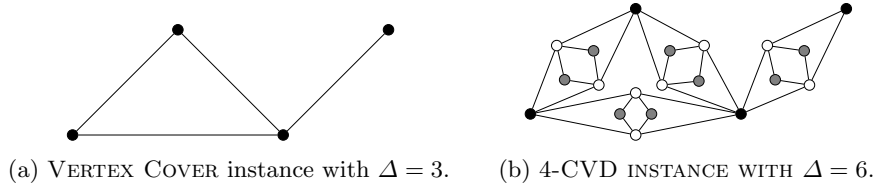

\begin{theorem}\label{lem:maxdeg6}
	$\probname$ is NP-hard for~$c \geq 2$ on bipartite graphs with maximum degree~$\Delta = \max(c, 6)$.
\end{theorem}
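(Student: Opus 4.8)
The plan is to reduce from the variant of \textsc{Vertex Cover} on graphs of maximum degree~$3$ mentioned above. Given such an instance~$(H,t)$, I construct a graph~$G$ as follows. Every vertex of~$H$ becomes an \emph{original} vertex of~$G$. For each edge~$\{u,v\}\in E(H)$ I add an edge gadget consisting of two bad-pair vertices~$w_1^{uv},w_2^{uv}$ and~$c-2$ further connecting vertices~$g_1^{uv},\dots,g_{c-2}^{uv}$; I make both~$w_1^{uv}$ and~$w_2^{uv}$ adjacent to~$u$, to~$v$, and to each~$g_i^{uv}$, and I add no further edges. Finally I set~$k\coloneqq t$. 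By construction~$w_1^{uv}$ and~$w_2^{uv}$ are nonadjacent and have exactly the~$c$ common neighbors~$\{u,v,g_1^{uv},\dots,g_{c-2}^{uv}\}$, so the gadget contributes one FSG whose bad pair is~$\{w_1^{uv},w_2^{uv}\}$. Two properties are immediate. First, every edge runs between a bad-pair vertex on one side and an original or connecting vertex on the other, so~$G$ is bipartite. Second, each bad-pair vertex has degree~$c$, each connecting vertex has degree~$2$, and each original vertex~$u$ is adjacent to exactly~$2\degree_H(u)\le 6$ bad-pair vertices; hence~$\Delta_G=\max(c,6)$.

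The core of the correctness proof is a structural claim: the bad pairs of~$G$ are \emph{exactly} the pairs~$\{w_1^{uv},w_2^{uv}\}$ (for~$c=2$ additionally each pair of original endpoints of an edge, whose only common neighbors are the two corresponding bad-pair vertices). I would prove this by a short case distinction over the types of the two candidate vertices, using that every non-original vertex lies in a single gadget: two original vertices share common neighbors only through the gadget of a common edge of~$H$ (at most~$2<c$ when~$c\ge 3$); two bad-pair vertices of distinct gadgets share at most the one original vertex their edges have in common; and the remaining type combinations yield at most two common neighbors. Granting this, the forward direction is easy: given a vertex cover~$C$ of size~$t$, delete the corresponding original vertices. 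Every gadget then loses one of its connecting vertices~$u,v$, so every bad pair of~$G$ retains at most~$c-1$ common neighbors. Since vertex deletion only shrinks common neighborhoods and never creates a new nonadjacency, no new bad pair appears and~$G-C$ is~$c$-closed.

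For the reverse direction I take a set~$S$ with~$|S|\le t$ such that~$G-S$ is~$c$-closed and normalize it to consist only of original vertices. Whenever~$S$ contains a bad-pair or connecting vertex of the gadget of some edge~$\{u,v\}$, I remove it from~$S$ and add~$u$ (if not already present). This does not increase~$|S|$; it keeps the gadget of~$\{u,v\}$ resolved, because deleting~$u$ removes a connecting vertex of its bad pair; and it cannot un-resolve any other gadget, since the removed vertex lay in only one gadget and additional deletions never create bad pairs. As each swap strictly decreases the number of non-original vertices in~$S$, the process terminates with an~$S$ that deletes only original vertices. Then, for every edge~$\{u,v\}$, the bad pair~$\{w_1^{uv},w_2^{uv}\}$ can only have been resolved by deleting~$u$ or~$v$; hence~$S$ is a vertex cover of~$H$ of size at most~$t$.

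I expect the main obstacle to be the structural claim bounding the unintended common neighborhoods: it requires checking every combination of vertex types and, for~$c=2$, handling that whole edges of~$H$ themselves become bad pairs, which must be shown to be resolved precisely when the associated gadget is, so that the two directions remain consistent. The normalization step in the reverse direction is the other delicate point, chiefly the verification that a swap never revives an already-resolved gadget; this is exactly where the fact that each non-original vertex belongs to a single gadget is used.
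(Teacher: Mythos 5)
Your proposal is correct and uses essentially the same reduction as the paper: the identical edge gadget with two nonadjacent bad-pair vertices and $c-2$ extra connecting vertices, the same bipartition and degree accounting, and the same forward/backward correspondence with vertex covers. The only cosmetic differences are that you make the classification of bad pairs (including the $\{u,v\}$ pairs when $c=2$) explicit and phrase the reverse direction as an exchange/normalization argument rather than a direct replacement, both of which are fine.
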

\begin{proof}\label{proof:maxdeg6}
	We reduce from a vertex cover instance~$(G,k)$ with maximum degree~$3$. The general idea is to replace each edge with an FSG. We create a new graph~$G'$ as follows: Each vertex~$v\in V(G)$ becomes a vertex in~$G'$. Additionally, for each edge~$e\in E(G)$ we add two bad pair vertices~$b_1^e, b_2^e$ and~$c - 2$ connecting vertices~$w_1^e, \hdots, w_{c - 2}^e$. We add edges from the bad pair vertices~$b_1^e, b_2^e$ to all connecting vertices and to the two endpoints of~$e$. Figure~\ref{fig:maxdeg6} shows this construction. Clearly, the vertices from the original graph~$G$ have degree~$\leq 6$ since they are now connected to two bad pair vertices per edge. The additional connecting vertices always have degree~$2$ and the bad pair vertices always have degree~$c$. Hence, the whole graph has maximum degree~$\max(c,6)$. It is also easy to see that~$G'$ is bipartite. The bad pair vertices that are added for each edge in the original graph form one side and the remaining vertices form the other side. This finishes the construction of the~$\probshort$ instance~$(G',k)$. We now need to show the following:
\begin{align*}
	(G,k) \text{ is a yes-instance } \Leftrightarrow (G',k) \text{ is a yes-instance.}
\end{align*}

	($\Rightarrow$) Let~$S$ be a solution for~$(G,k)$. For each~$e\in E(G)$, at least one of the~$c$ neighbors of the vertices~$b_1^e$ and~$b_2^e$ is in~$S$ which means they cannot form a bad pair in~$G' - S$. The vertices~$w_1^e, \hdots, w_{c - 2}^e$ have degree~$2$ but only exist if~$c > 2$. These can also not be part of a bad pair. Each connected component in~$G' - S$ can only contain at most one vertex from~$V(G)$ since~$S$ contains at least one endpoint of each edge. Together this means that~$G' - S$ cannot contain an FSG.
	
	($\Leftarrow$) Let~$S'$ be a solution for~$(G',k)$. We now construct a set~$S$ that is a solution for~$(G,k)$. First, we add all vertices in~$S' \cap V(G)$ to~$S$. Each vertex~$v\in S' \setminus V(G)$ is either a bad pair vertex~$b_i^e$ or a connecting vertex~$w_i^e$ for some edge~$e\in E(G)$. For each of these we add an arbitrary endpoint of~$e$ to~$S$. Clearly,~$S$ contains at most~$k$ vertices. For each edge~$e\in E(G)$,~$S'$ contains at least one vertex from the corresponding FSG. Therefore~$S$ must contain at least one vertex from each edge in~$E(G)$. \qed
\end{proof}

The following theorem proves further hard cases for~$c=2$ and~$c=3$.

\begin{figure}[t]
        \centering
        \begin{subfigure}[t]{0.49\textwidth}
            \centering
            \scalebox{0.7}{
            \begin{tikzpicture}[scale=1]
            	\node[fill,inner sep=2pt,minimum size=0pt] (A) at (0,0) [circle,draw] {};
            	\node at (A) [above,yshift=1mm] {\large $v$};
                \node[fill,inner sep=2pt,minimum size=0pt] (B) at (-2,-2) [circle,draw] {};
            	\node at (B) [left,xshift=-1mm] {\large $u_1$};
                \node[fill,inner sep=2pt,minimum size=0pt] (C) at (0,-2.5) [circle,draw] {};
            	\node at (C) [left,xshift=-1mm] {\large $u_2$};
                \node[fill,inner sep=2pt,minimum size=0pt] (D) at (2,-2) [circle,draw] {};
            	\node at (D) [right,xshift=1mm] {\large $u_3$};
                \node[fill,inner sep=2pt,minimum size=0pt] (E) at (-2,-3) [circle,draw] {};
                \node[fill,inner sep=2pt,minimum size=0pt] (F) at (0,-3.5) [circle,draw] {};
                \node[fill,inner sep=2pt,minimum size=0pt] (G) at (2,-3) [circle,draw] {};
                \draw (A) -- (B);
                \draw (A) -- (C);
                \draw (A) -- (D);
                \draw (B) -- (E);
                \draw (C) -- (F);
                \draw (D) -- (G);
            \end{tikzpicture}
            }
            \caption{\textsc{Vertex Cover} instance with~$\Delta = 3$.}
            \label{fig:maxdeg45a}
        \end{subfigure}
        \begin{subfigure}[t]{0.49\textwidth}
            \centering
            \scalebox{0.7}{
            \begin{tikzpicture}[scale=1]
            	
            	\node[fill,inner sep=2pt,minimum size=0pt] (A) at (0,0) [circle,draw] {};
            	\node at (A) [above,yshift=1mm] {\large $v$};
                \node[fill,inner sep=2pt,minimum size=0pt] (B) at (-2,-2) [circle,draw] {};
            	\node at (B) [left,xshift=-1mm] {\large $u_1$};
                \node[fill,inner sep=2pt,minimum size=0pt] (C) at (0,-2.5) [circle,draw] {};
            	\node at (C) [left,xshift=-1mm] {\large $u_2$};
                \node[fill,inner sep=2pt,minimum size=0pt] (D) at (2,-2) [circle,draw] {};
            	\node at (D) [right,xshift=1mm] {\large $u_3$};
                \node[fill,inner sep=2pt,minimum size=0pt] (E) at (-2,-3) [circle,draw] {};
                \node[fill,inner sep=2pt,minimum size=0pt] (F) at (0,-3.5) [circle,draw] {};
                \node[fill,inner sep=2pt,minimum size=0pt] (G) at (2,-3) [circle,draw] {};
                \node[fill=gray,inner sep=2pt,minimum size=0pt] (AB) at (-1,-1) [rectangle,draw] {};
            	\node at (AB) [below right, xshift=-1mm] {\large $s_1^v$};
                \node[fill=gray,inner sep=2pt,minimum size=0pt] (AC) at (0,-1.25) [rectangle,draw] {};
            	\node at (AC) [below left] {\large $s_2^v$};
                \node[fill=gray,inner sep=2pt,minimum size=0pt] (AD) at (1,-1) [rectangle,draw] {};
            	\node at (AD) [below left] {\large $s_3^v$};
                \node[fill=white,inner sep=2pt,minimum size=0pt] (BE1) at (-2.5,-2.5) [circle,draw] {};
                \node[fill=gray,inner sep=2pt,minimum size=0pt] (BE2) at (-2,-2.5) [circle,draw] {};
                \node[fill=white,inner sep=2pt,minimum size=0pt] (BE3) at (-1.5,-2.5) [circle,draw] {};
                \node[fill=white,inner sep=2pt,minimum size=0pt] (CF1) at (-0.5,-3) [circle,draw] {};
                \node[fill=gray,inner sep=2pt,minimum size=0pt] (CF2) at (0,-3) [circle,draw] {};
                \node[fill=white,inner sep=2pt,minimum size=0pt] (CF3) at (0.5,-3) [circle,draw] {};
                \node[fill=white,inner sep=2pt,minimum size=0pt] (DG1) at (1.5,-2.5) [circle,draw] {};
                \node[fill=gray,inner sep=2pt,minimum size=0pt] (DG2) at (2,-2.5) [circle,draw] {};
                \node[fill=white,inner sep=2pt,minimum size=0pt] (DG3) at (2.5,-2.5) [circle,draw] {};
                
                \node[fill=gray,inner sep=2pt,minimum size=0pt] (X1) at (-1.5,-0.5) [rectangle,draw] {};
            	\node at (X1) [left,xshift=-1mm] {\large $x_1^v$};
                \node[fill=gray,inner sep=2pt,minimum size=0pt] (X2) at (1.5,-0.5) [rectangle,draw] {};
            	\node at (X2) [right,xshift=1mm] {\large $x_2^v$};
                
                \draw (X1) to[out=-100, in=90] (B);
                \draw (X1) to[out=-90, in=135] (C);
                \draw (X1) to[out=-80, in=180] (D);
                \draw (X2) to[out=-100, in=0] (B);
                \draw (X2) to[out=-90, in=45] (C);
                \draw (X2) to[out=-80, in=90] (D);
                \draw (X1) -- (A);
                \draw (X2) -- (A);
                \draw (X1) -- (X2);

                \draw (A) -- (AB);
                \draw (AB) -- (B);
                \draw (A) -- (AC);
                \draw (AC) -- (C);
                \draw (A) -- (AD);
                \draw (AD) -- (D);
                
                \draw (B) -- (BE1);
                \draw (B) -- (BE3);
                \draw (BE1) -- (E);
                \draw (BE3) -- (E);
                \draw (BE1) -- (BE2);
                \draw (BE3) -- (BE2);
                \draw (C) -- (CF1);
                \draw (C) -- (CF3);
                \draw (CF1) -- (F);
                \draw (CF3) -- (F);
                \draw (CF1) -- (CF2);
                \draw (CF3) -- (CF2);
                \draw (D) -- (DG1);
                \draw (D) -- (DG3);
                \draw (DG1) -- (G);
                \draw (DG3) -- (G);
                \draw (DG1) -- (DG2);
                \draw (DG3) -- (DG2);

            \end{tikzpicture}
            }
            \caption{\textsc{$3$-CVD instance with~$\Delta = 5$.}}
            \label{fig:maxdeg45b}
        \end{subfigure}

        \caption{Construction of a \textsc{$3$-CVD} instance with maximum degree~$5$ from a \textsc{Vertex Cover} instance with maximum degree~$3$. The vertices from the \textsc{Vertex Cover} instance are black. The gray square vertices are the connecting vertices for the FSGs corresponding to the edges incident with the degree-$3$ vertex. The white circle vertices are the bad pair vertices for the FSGs corresponding to other edges with the gray circle vertices as additional connecting vertices.}
        \label{fig:maxdeg45}
\end{figure}
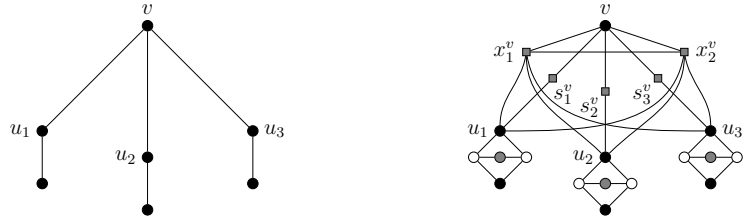

\begin{theorem}\label{lem:maxdeg45}
  $\probname$ is NP-hard for~$c = 2$ and~$\Delta = 4$ and for~$c = 3$ and~$\Delta = 5$.
\end{theorem}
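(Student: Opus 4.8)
The plan is to reduce from the same restricted \textsc{Vertex Cover} variant used for \Cref{lem:maxdeg6}, in which $G$ has maximum degree $3$ and all degree-$3$ vertices are pairwise at distance at least $3$. This distance condition is exactly what I would exploit to save degree: every neighbour of a degree-$3$ vertex has degree at most $2$, and every edge is incident to at most one degree-$3$ vertex. The key new idea, shown in \Cref{fig:maxdeg45} for $c=3$, is to avoid the degree blow-up of \Cref{lem:maxdeg6} at a degree-$3$ vertex $v$ by letting the \emph{original} endpoints of the edges at $v$ be the bad pairs and by \emph{sharing} connecting vertices among all three edges at $v$. Concretely, for $c=3$ I add two shared connecting vertices $x_1^v,x_2^v$ (made adjacent to each other and to $v,u_1,u_2,u_3$, so that they do not themselves form a bad pair) and one private connecting vertex $s_i^v$ per edge $\{v,u_i\}$; then $\{v,u_i\}$ is a bad pair with connecting vertices $s_i^v,x_1^v,x_2^v$ and $v$ has degree exactly $5$. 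Every remaining edge joins two vertices of degree at most $2$ and is replaced by the standard FSG gadget of \Cref{lem:maxdeg6}. For $c=2$ I use the same template but with a single shared connecting vertex $x^v$ and a private $s_i^v$ per edge, so $\{v,u_i\}$ has the two connecting vertices $x^v,s_i^v$ and $v$ has degree exactly $4$; a routine check shows that every vertex then has degree at most $4$ (for $c=2$) or $5$ (for $c=3$).

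For the forward direction, given a vertex cover $S$ of $G$ I delete the same original vertices from $G'$ and check that every FSG is destroyed. For a genuine edge $\{v,u_i\}$ at a degree-$3$ vertex the bad pair is $\{v,u_i\}$ and $S$ deletes one of them; for a standard edge $\{a,b\}$ the endpoints $a,b$ are connecting vertices and $S$ deletes one, dropping the common-neighbour count below $c$; and the \emph{spurious} bad pairs created by sharing (discussed below) always have the endpoints of an already-covered edge as connecting or bad-pair vertices, so they are destroyed as well. Hence $G'-S$ is $c$-closed and the budget $k$ is preserved.

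The backward direction is the heart of the proof and I would handle it by a \emph{swap} argument. For every auxiliary (non-original) vertex $a$ I exhibit an original vertex $p(a)$, an endpoint of the edge $a$ belongs to, that lies in \emph{every} FSG containing $a$. Given this, for any solution $S'$ I set $S''\coloneqq(S'\cap V(G))\cup\{\,p(a):a\in S'\setminus V(G)\,\}$; then $|S''|\le|S'|$, and $S''$ is again a solution, since any FSG hit by $S'$ is also hit by $S''$ (if it is hit by an original vertex, that vertex is in $S''$; if it is hit by an auxiliary $a$, then $p(a)$ lies in it and in $S''$). As $S''\subseteq V(G)$, for every edge $e$ the only vertices of the FSG $F_e$ that $S''$ can use are the original endpoints of $e$, so $S''$ contains an endpoint of $e$ and is a vertex cover of $G$ of size at most $k$.

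The main obstacle, and the reason the two cases need slightly different gadgets, is controlling the spurious bad pairs created by sharing and checking that the swap targets $p(a)$ stay valid for them. For $c=3$ the threshold saves us: two neighbours $u_i,u_j$ of $v$ share only $x_1^v,x_2^v$ (two common neighbours, below $c=3$), and $s_i^v$ shares only $v,u_i$ with any $x_j^v$, so no unintended bad pairs arise and each auxiliary vertex sits only in its intended FSGs; here $p(x_j^v)=v$ works because $v$ is the common bad-pair vertex of all three edges at $v$. For $c=2$ the threshold is too tight for two shared vertices, and the single shared $x^v$ together with $s_i^v$ does create spurious bad pairs $\{x^v,s_i^v\}$ (and each standard edge becomes a $C_4$ carrying a second bad pair $\{a,b\}$). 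The point I would verify is that every such spurious FSG has $v,u_i$ (resp.\ the two original endpoints) among its \emph{connecting} vertices, so the same targets $p(x^v)=v$, $p(s_i^v)=u_i$ still lie in all FSGs, and in the forward direction these spurious FSGs are destroyed automatically because their connecting vertices form an edge covered by $S$. Once this bookkeeping is in place both equivalences follow and, as the construction is clearly polynomial, NP-hardness is established.
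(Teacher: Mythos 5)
Your construction is exactly the paper's: the same restricted \textsc{Vertex Cover} source, the same standard FSG gadget for edges between degree-$\leq 2$ vertices, and the same shared gadget ($s_i^v$ plus one or two mutually adjacent $x_j^v$) at each degree-$3$ vertex, with the same forward direction and a backward direction that, like the paper's, maps each deleted auxiliary vertex back to an endpoint of its edge. The one case your bookkeeping overlooks (as does, for what it is worth, the paper's own writeup) is when two neighbours $u_i,u_j$ of a degree-$3$ vertex $v$ are themselves adjacent in $G$: then in $G'$ they are nonadjacent and have the shared vertices $x_j^v$ \emph{and} the gadget vertices $b_1^{e'},b_2^{e'}$ of $e'=\{u_i,u_j\}$ as common neighbours, so $\{u_i,u_j\}$ is an unintended bad pair for $c=2$ and even for $c=3$, and it yields FSGs that contain $x_j^v$ but not $v$ --- falsifying your claim that $p(x_j^v)=v$ lies in every FSG containing $x_j^v$ (and your claim that no unintended bad pairs arise for $c=3$). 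The reduction survives because every FSG still intersects $V(G)$ in exactly the endpoints of one edge of $G$, and the standard gadget for $e'$ independently forces an endpoint of $e'$ into the swapped solution, which then also hits these extra FSGs; so you should run the swap argument per edge, via that structural claim, rather than via the too-strong per-vertex invariant.
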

\begin{proof}
	We reduce from a Vertex Cover instance~$(G,k)$ with maximum degree~$3$ where all degree~$3$ vertices have distance at least~$3$ to each other. We want to replace each edge with an FSG but we will use a special construction for edges that have a degree-$3$ vertex as an endpoint. We construct a new graph~$G'$ as follows: Each vertex~$v\in V(G)$ becomes a vertex in~$G'$. For each edge~$e\in E(G)$ where both endpoints have at most degree~$2$ we add two bad pair vertices~$b_1^e, b_2^e$ and~$c - 2$ connecting vertices~$w_1^e, \hdots, w_{c - 2}^e$. We add edges from the bad pair vertices to all connecting vertices and the two endpoints of~$e$. Note that this is the same construction we use in the proof of Theorem~\ref{lem:maxdeg6}. With this,~$G'$ currently has a maximum degree of~$c$ while the vertices in~$V(G')\cap V(G)$ all have at most degree~$2$.
For the remaining edges we look at each degree~$3$ vertex individually. Let~$v\in V(G)$ be a degree~$3$ vertex in~$G$ and let~$u_1, u_2, u_3$ be the three neighbors of~$v$ in~$G$. We now add the vertices~$s_1^v, s_2^v, s_3^v$ to~$G'$ and connect them to~$v$. Additionally, we connect~$u_i$ to~$s_i^v$ for each~$i\in \{1,2,3\}$. If~$c = 2$ we add one additional vertex~$x_1^v$ and connect it to~$v,u_1, u_2, u_3$. If instead~$c = 3$ we add two vertices~$x_1^v, x_2^v$ and connect them to~$v,u_1, u_2, u_3$ and eachother. The construction is visualized in Fig.~\ref{fig:maxdeg45}. The~$s_i^v$ vertices have degree~$2$ while~$v$, the~$x_i^v$ vertices, and the~$u_i$ vertices have degree~$4$ and~$5$ for~$c = 2$ and~$c = 3$, respectively. That means~$G'$ has a maximum degree of~$4$ for~$c = 2$ and~$5$ for~$c = 3$. This finishes the construction of the~$\probshort$ instance~$(G',k)$.

	We will first show that for every FSG~$H$ in~$G'$ we have~$V(H)\cap V(G) = e$ for some edge~$e\in E(G)$. Let~$H$ be some FSG in~$G'$. First, observe that~$H$ cannot contain three vertices from~$V(G)$: The vertices in~$V(G)$ are never adjacent in~$G'$. The FSG~$H$ can therefore only contain three of them if they are all connecting vertices and~$c = 3$. The only vertices that have at least three neighbors from~$V(G)$ are the~$x_1^v, x_2^v$ vertices. These two must therefore form the bad pair which is not possible because they are adjacent.
Next, we show that~$H$ cannot contain two vertices from~$V(G)$ that are not adjacent in~$G$: The vertices in~$H$ must have at most distance~$2$ to each other. The only vertices in~$V(G)$ that are not connected in~$G$ for which this is the case are the neighbors~$u_1,u_2,u_3$ of some degree-$3$ vertex~$v$ in~$G$. However, for~$c = 3$ they only have the vertices~$x_1^v$ and~$x_2^v$ as common neighbors and those two vertices cannot form a bad pair since they are adjacent. For~$c = 2$ they only have~$x_1^v$ as a common neighbor which also cannot form a bad pair on its own.
Finally, we show that~$H$ has to contain at least two vertices from~$V(G)$: Any FSG must contain a cycle of length~$4$ as a subgraph. That means if~$H$ contains at most one vertex from~$V(G)$, then~$H$ must contain a path on three vertices from~$V(G')\setminus V(G)$ as a subgraph. In~$G'$ the only instances of such subgraphs are the vertices~$b_1^e, b_2^e, w_1^e$ for some edge~$e\in E(G)$ when~$c = 3$. The only way these vertices can form an FSG is by including the endpoints of~$e$ as connecting vertices which are both in~$V(G)$. This proves the statement. The construction tells us that for each edge~$e\in E(G)$ we also have at least one FSG~$H$ with~$H\cap V(G) = e$. We now need to show the following:
\begin{align*}
	(G,k) \text{ is a yes-instance } \Leftrightarrow (G',k) \text{ is a yes-instance.}
\end{align*}

	($\Rightarrow$) Since each FSG in~$G'$ contains the two endpoints of some edge in~$G$ we know that any solution~$S$ for~$(G,k)$ is also a solution for~$(G',k)$.
	
	($\Leftarrow$) Let~$S'$ be a solution for~$(G',k)$. We now construct a set~$S$ that is a solution for~$(G,k)$. The FSGs in~$G'$ share a vertex if and only if the corresponding edges in~$G$ share an endpoint. This can easily be seen from the construction of~$G'$ by observing that each FSG must have diameter~$2$. That means if a vertex in~$S'$ appears in multiple FSGs we just add the vertex in which the corresponding edges overlap to~$S$. And if a vertex in~$S'$ only appears in one FSG, then we add an arbitrary endpoint of the corresponding edge to~$S$. The solution~$S$ now clearly contains at most~$k$ vertices from~$V(G)$ and covers all edges in~$E(G)$. \qed
\end{proof}

\subsection{Polynomial-Time Solvable Cases}

If~$\Delta < c$, then~$\probshort$ is obviously polynomial-time solvable since any input graph would already be~$c$-closed. To show less trivial cases, we use the following reduction rule that removes edges that are not critical edges in FSGs and do not create new FSGs when removed. Recall that an edge is critical in~$G$ if it is between a bad pair vertex and one of its connecting vertices of some FSG.

\begin{rrule}\label{rrule:non_critical_edges}
	Remove an edge~$\{ u,v \}\in E(G)$ if it is not critical in~$G$ and~$|N(u) \cap N(v)| < c$.
\end{rrule}

\begin{lemma}
	Rule~\ref{rrule:non_critical_edges} is correct.
\end{lemma}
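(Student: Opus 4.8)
The plan is to show that the rule produces an \emph{equivalent} instance: writing~$G'$ for the graph obtained from~$G$ by deleting the edge~$\{u,v\}$ (the vertex set and the budget~$k$ are unchanged), I would prove that for \emph{every} vertex set~$S\subseteq V$ the graph~$G-S$ is~$c$-closed if and only if~$G'-S$ is~$c$-closed. This immediately yields that~$(G,k)$ and~$(G',k)$ are simultaneously yes-instances. The case~$u\in S$ or~$v\in S$ is trivial, since then deleting an endpoint already removes the edge~$\{u,v\}$ and hence~$G-S=G'-S$. So the whole argument reduces to the case~$u,v\notin S$, where~$G'-S$ is exactly~$G-S$ with the single edge~$\{u,v\}$ removed, and it suffices to show that this one edge deletion preserves~$c$-closedness in both directions.

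For the forward direction (removing the edge cannot destroy~$c$-closedness) I would use only the premise~$|N(u)\cap N(v)|<c$. The key observation is that deleting an edge never increases any common neighborhood and changes the adjacency of no pair other than~$\{u,v\}$ itself; hence the \emph{only} pair that could newly become a bad pair in~$G'-S$ is~$\{u,v\}$. But by monotonicity of common neighborhoods under vertex deletion we have~$|N_{G-S}(u)\cap N_{G-S}(v)|\le |N_G(u)\cap N_G(v)|<c$, so~$\{u,v\}$ has too few common neighbors to be a bad pair, and~$G'-S$ stays~$c$-closed.

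The backward direction (adding the edge back cannot create a bad pair) is where the non-criticality hypothesis is essential, and I expect it to be the main obstacle. Assume for contradiction that~$G'-S$ is~$c$-closed but~$G-S$ contains a bad pair~$P$. Re-inserting~$\{u,v\}$ makes~$\{u,v\}$ adjacent and enlarges only the neighborhoods of~$u$ and~$v$, each by the other vertex. Thus~$P\neq\{u,v\}$ (now adjacent), and since~$P$ is nonadjacent in both graphs its common neighborhood must have grown, which can only happen by gaining~$u$ or~$v$ as a common neighbor. A short case analysis shows~$P$ must contain exactly one of~$u,v$, say~$P=\{a,v\}$, with the other vertex~$u$ as the newly added connecting vertex. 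Then~$u$ together with~$c-1$ further common neighbors of~$a$ and~$v$ witnesses an FSG inside~$G-S$ in which~$\{u,v\}$ is a critical edge.

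The delicate final step is to lift this FSG from~$G-S$ to~$G$: because~$G-S$ is an \emph{induced} subgraph of~$G$, the nonadjacency of the bad pair~$\{a,v\}$ and the fact that the chosen~$c$ connecting vertices are adjacent to both~$a$ and~$v$ all carry over to~$G$, so the same FSG occurs in~$G$ with~$\{u,v\}$ as a critical edge. This contradicts the premise that~$\{u,v\}$ is not critical in~$G$, so no such~$P$ can exist and~$G-S$ is~$c$-closed. Combining the two directions in the case~$u,v\notin S$ with the trivial case establishes the claimed equivalence and hence the correctness of the rule.
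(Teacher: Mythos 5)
Your proof is correct and takes essentially the same route as the paper: the forward direction uses $|N(u)\cap N(v)|<c$ to rule out $\{u,v\}$ becoming a new bad pair (the only pair whose adjacency changes), and the backward direction derives a critical edge from any FSG that would be destroyed, contradicting non-criticality. Your presentation as a per-set equivalence ($G-S$ is $c$-closed iff $G'-S$ is, for every $S$) is a slightly more granular packaging of the same two observations.
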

\begin{proof}
	Let~$(G,k)$ be a~$\probshort$ instance and let~$G'$ be the graph that is created after Rule~\ref{rrule:non_critical_edges} removed edge~$e$ from~$G$. We show that $(G,k)\text{ is a yes-instance} \Leftrightarrow (G',k)\text{ is a yes-instance.}$
	
	($\Rightarrow$) Let~$S$ be a solution for~$(G,k)$. Since~$S$ covers all FSGs in~$G$, the only way that~$S$ is not a solution for~$(G',k)$ is if removing~$e$ creates a new FSG~$H$ in~$G'$. Removing an edge cannot add a new common neighbor to a pair of vertices. That means~$e$ is an edge between the bad pair of~$H$. But then the bad pair in~$H$ has at least~$c$ common neighbors in~$G$ which contradicts the precondition of Rule~\ref{rrule:non_critical_edges}. Hence,~$S$ is also a solution for~$(G',k)$.
	
	($\Leftarrow$) Let~$S'$ be a solution for~$(G',k)$. Since~$S'$ covers all FSGs in~$G'$, the only way that~$S'$ is not also a solution for~$(G,k)$ is if removing~$e$ destroys an FSG~$H$ in~$G$. This is only possible if removing~$e$ reduces the number of common neighbors of the bad pair in~$H$. However, that means that~$e$ is an edge between a bad pair vertex and a connecting vertex in~$H$ which contradicts that~$e$ is not critical due to the precondition of Rule~\ref{rrule:non_critical_edges}. Hence,~$S'$ is also a solution for~$(G,k)$. \qed
\end{proof}

\iflong Note that this rule can be exhaustively applied in~$\Oh(n)$ time if the maximum degree is constant. This is because such a graph can have at most~$\frac{\Delta}{2} n$ edges and checking the two conditions from Rule~\ref{rrule:non_critical_edges} for an edge can be done in constant time. We only need to check each edge once because the second precondition of Rule~\ref{rrule:non_critical_edges} ensures that no new FSGs are created when removing an edge.\fi

We can now use this rule to show that~$\probshort$ is polynomial-time solvable for~$c = \Delta = 2$ and~$c = \Delta = 3$. In both proofs we use the following observation. If Rule~\ref{rrule:non_critical_edges} cannot be applied to a graph~$G$ with~$c = \Delta$, then every edge in~$G$ must be a critical edge. This is because there cannot be an edge where the two endpoints have at least~$c$ common neighbors as they would then have degree at least~$c + 1$.

\begin{figure}[t]
        \centering
        \begin{subfigure}[t]{0.3\textwidth}
            \centering
            \begin{tikzpicture}[xscale=0.5,yscale=0.3, every node/.style={circle, fill, inner sep=2pt, minimum size=0pt, draw=none}]
            % Knoten
            \node[label=left:$w_1$,fill=lightgray] (W) at (0,2) [circle,draw] {};
            \node[label=right:$w_2$,fill=lightgray] (X) at (2,2) [circle,draw] {};
            \node[label=right:$w_3$,fill=lightgray] (Y) at (4,2) [circle,draw] {};
            \node[label=$u$] (U) at (1,4) [circle,draw] {};
            \node[label=$v$] (V) at (3,4) [circle,draw] {};
            \node[label=left:$t$,fill=white] (A) at (1,0) [circle,draw] {};
            \node[label=right:$s$,fill=white] (B) at (3,0) [circle,draw] {};
            
            % Kanten
            \draw (U) -- (W);
            \draw (U) -- (X);
            \draw (U) -- (Y);
            \draw (V) -- (W);
            \draw (V) -- (X);
            \draw (V) -- (Y);
            \draw[dashed] (A) -- (W);  
            \end{tikzpicture}
            \caption{\textsc{FSG} with additional dashed critical edge.}
            \label{fig:maxdeg3a}
        \end{subfigure}
        \hfill
        \begin{subfigure}[t]{0.3\textwidth}
            \centering
            \begin{tikzpicture}[xscale=0.5,yscale=0.3, every node/.style={circle, fill, inner sep=2pt, minimum size=0pt, draw=none}]
            % Knoten
            \node[label=left:$w_1$,fill=lightgray] (W) at (0,2) [circle,draw] {};
            \node[label=right:$w_2$,fill=lightgray] (X) at (2,2) [circle,draw] {};
            \node[label=right:$w_3$,fill=lightgray] (Y) at (4,2) [circle,draw] {};
            \node[label=$u$] (U) at (1,4) [circle,draw] {};
            \node[label=$v$] (V) at (3,4) [circle,draw] {};
            \node[label=left:$t$,fill=white] (A) at (1,0) [circle,draw] {};
            \node[label=right:$s$,fill=white] (B) at (3,0) [circle,draw] {};
            
            % Kanten
            \draw (U) -- (W);
            \draw (U) -- (X);
            \draw (U) -- (Y);
            \draw (V) -- (W);
            \draw (V) -- (X);
            \draw (V) -- (Y);
            \draw (A) -- (W);
            \draw (X) -- (A);
            \draw (A) -- (Y);
            \draw (A) -- (X);
        \end{tikzpicture}
        \caption{Case 2:~$t$ is in a bad pair. All vertices must have maximum degree.}
        \label{fig:maxdeg3b}
        \end{subfigure}
        \hfill
        \begin{subfigure}[t]{0.3\textwidth}
            \centering
            \begin{tikzpicture}[xscale=0.5,yscale=0.3, every node/.style={circle, fill, inner sep=2pt, minimum size=0pt, draw=none}]
            % Knoten
            \node[label=left:$w_1$,fill=lightgray] (W) at (0,2) [circle,draw] {};
            \node[label=right:$w_2$,fill=lightgray] (X) at (2,2) [circle,draw] {};
            \node[label=right:$w_3$,fill=lightgray] (Y) at (4,2) [circle,draw] {};
            \node[label=$u$] (U) at (1,4) [circle,draw] {};
            \node[label=$v$] (V) at (3,4) [circle,draw] {};
            \node[label=left:$t$,fill=white] (A) at (1,0) [circle,draw] {};
            \node[label=right:$s$,fill=white] (B) at (3,0) [circle,draw] {};
            
            % Kanten
            \draw (U) -- (W);
            \draw (U) -- (X);
            \draw (U) -- (Y);
            \draw (V) -- (W);
            \draw (V) -- (X);
            \draw (V) -- (Y);
            \draw (A) -- (W);
            \draw (X) -- (A);
            \draw[dashed] (A) -- (B);
            \draw[dashed] (B) -- (Y);
        \end{tikzpicture}
        \caption{Case 3:~$w_1$ is in a bad pair. We prove that the dashed edges cannot exist.}
        \label{fig:maxdeg3c}
        \end{subfigure}
        \caption{Connected components for max degree~$3$ graphs with~$c = 3$ after exhaustively applying Rule~\ref{rrule:non_critical_edges}. The black vertices are the bad pair in the initial FSG. The gray vertices are the connecting vertices of the bad pair and the white vertices are additional vertices that may or may not be part of the component.}
        \label{fig:maxdeg3}
\end{figure}
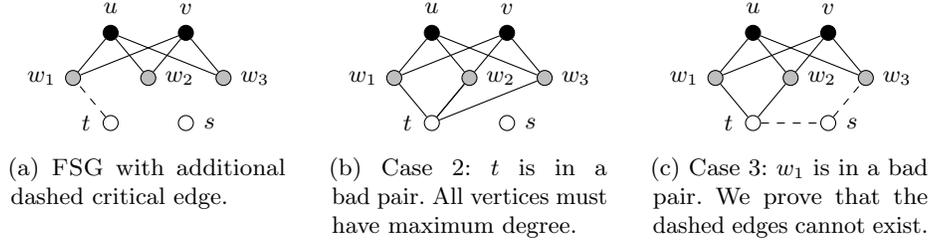

\begin{theorem}\label{lem:maxdeg3_easy}
	$\probshort$ can be solved in~$\Oh(n)$ time on graphs with~$c = \Delta = 2$ and on graphs with~$c = \Delta = 3$.
\end{theorem}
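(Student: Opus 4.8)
The plan is to make the reduced instance trivial to evaluate after exhaustively applying Rule~\ref{rrule:non_critical_edges}. By the observation preceding the theorem, once the rule is no longer applicable and $c=\Delta$, every remaining edge is critical, i.e.\ it lies in some FSG. I would first argue that the reduced graph then decomposes into connected components of bounded size, so that the minimum number of deletions can be computed independently per component (by brute force over vertex subsets) and the component optima simply summed. Summing is correct because two vertices in different components share no common neighbor and hence never form a bad pair, so deletions in one component neither create nor destroy FSGs in another. Since $\Delta$ is constant, the graph has $\Oh(n)$ edges, Rule~\ref{rrule:non_critical_edges} is applied in $\Oh(n)$ time, the components are found by one traversal in $\Oh(n)$ time, and each bounded component is solved in $\Oh(1)$ time, which yields the overall $\Oh(n)$ bound.

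For $c=\Delta=2$ the FSGs are exactly $4$-cycles: a bad pair $\{u,v\}$ forces $N(u)=N(v)=\{w_1,w_2\}$ and each $w_i$ has degree $2$, so the FSG is the induced $C_4$ on $u,w_1,v,w_2$ and no vertex admits a further edge. I would check that an edge of a path or of a longer cycle is never critical, since adjacent vertices on such a component have at most one common neighbor; hence after reduction the only components carrying edges are isolated $C_4$'s and all other vertices are isolated. Each $C_4$ contains a bad pair, and a single deletion turns it into a $P_3$, which is $2$-closed, so its optimum is exactly $1$. The answer is therefore ``yes'' iff the number of $C_4$ components is at most $k$.

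For $c=\Delta=3$ the argument is the same in spirit but needs the case distinction sketched in Fig.~\ref{fig:maxdeg3}. The starting point is that any bad pair $\{u,v\}$ satisfies $N(u)=N(v)=\{w_1,w_2,w_3\}$, giving a $K_{2,3}$ core, and that each $w_i$ has at most one neighbor outside $\{u,v\}$. If some $w_i$ has such an extra neighbor $t$, the edge $\{w_i,t\}$ must be critical, so either $t$ is a bad-pair vertex and $w_i$ a connecting vertex of the witnessing FSG (Case~2, Fig.~\ref{fig:maxdeg3b}), or $w_i$ is a bad-pair vertex and $t$ a connecting vertex (Case~3, Fig.~\ref{fig:maxdeg3c}). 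Using only $\Delta=3$ and the twin property I would show that Case~2 forces $N(t)=\{w_1,w_2,w_3\}$, turning the component into $K_{3,3}$, while Case~3 forces a second connecting vertex $w_j$ to also have neighborhood $\{u,v,t\}$, so that $\{w_i,w_j\}$ becomes a second bad pair with connecting vertices $u,v,t$; this is the six-vertex graph of Fig.~\ref{fig:maxdeg3c}. In both cases every remaining vertex already has degree $3$ or cannot receive another critical edge, so the component is closed off at six vertices. I would then read off the per-component optima ($1$ for $K_{2,3}$, $2$ for $K_{3,3}$, $1$ for the Case~3 graph), or simply brute-force them, and again compare the sum against $k$.

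The main obstacle is the completeness of the $c=3$ case analysis: I must rule out any way of growing a component beyond six vertices. Concretely, the crux is showing that the two ``free'' degree-$2$ vertices of the Case~3 graph (the third connecting vertex $w_3$ and the shared vertex $t$) cannot acquire a third neighbor, which is exactly the impossibility of the dashed edges in Fig.~\ref{fig:maxdeg3c}: any such edge would either push some vertex past degree $3$ or demand a common neighbor that the already-fixed neighborhoods of the existing vertices cannot supply. Checking these small degree and neighborhood constraints in every subcase, and in particular verifying that an edge between two connecting vertices is never critical, is the part that requires the most care.
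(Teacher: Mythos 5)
Your proposal follows essentially the same route as the paper's proof: exhaustively apply Rule~\ref{rrule:non_critical_edges}, use the observation that for $c=\Delta$ every surviving edge is critical, conclude that components are isolated $C_4$'s for $c=\Delta=2$ and have at most six vertices for $c=\Delta=3$ via the same case analysis on how a critical edge $\{w_i,t\}$ can extend the $K_{2,3}$ core (leading to $K_{3,3}$ or the graph of Fig.~\ref{fig:maxdeg3c}), and then solve each bounded component in constant time. The step you flag as the crux---ruling out a seventh vertex attached to $w_3$ or $t$ in the Case~3 graph---is exactly the sub-case analysis the paper carries out, and your sketch of why it fails (degree saturation of $u,v,w_1,w_2$ and the unavailability of the required common neighbors) matches their argument.
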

\begin{proof}[for~$c = \Delta = 2$]\label{proof:maxdeg3_easy}
	Let~$(G,k)$ be a~$\probshort$ instance with~$c = \Delta = 2$ and let~$G'$ be the graph that is obtained by exhaustively applying Rule~\ref{rrule:non_critical_edges} to~$G$. We know that every edge in~$G'$ must be part of an FSG. However, an FSG with the smallest possible number of edges for~$c = 2$ has four vertices that all already have degree~$2$. This means any connected component in~$G'$ is either an isolated vertex or one of these FSGs. These instances can easily be solved by removing any single vertex from each FSG. Applying Rule~\ref{rrule:non_critical_edges} can be done in~$\Oh(n)$ time and removing a vertex from each non-singleton connected component can also be done in~$\Oh(n)$ time. \qed
\end{proof}
\begin{proof}[for~$c = \Delta = 3$]
	Let~$(G,k)$ be a~$\probshort$ instance with~$c = \Delta = 3$ and let~$G'$ be the graph obtained by exhaustively applying Rule~\ref{rrule:non_critical_edges} to~$G$. We show that each connected component in~$G'$ has at most~$6$ vertices and can therefore be solved in constant time. The main steps of the proof are illustrated in Fig.~\ref{fig:maxdeg3}. Clearly, each connected component with at least one edge must contain an FSG. Let~$H$ be such an FSG with the bad pair~$u$ and~$v$ and the connecting vertices~$w_1, w_2,$ and~$w_3$. The bad pair vertices already have degree~$3$ and can therefore not be adjacent to any other vertex. The connecting vertices have degree~$2$ and are therefore adjacent to at most one further vertex. If no additional edge exists, then the component only has~$5$ vertices and we are done. Without loss of generality, assume that~$w_1$ is adjacent to a vertex~$t$ (see Fig.~\ref{fig:maxdeg3a}). The edge~$\{w_1, t\}$ must be critical edge which leads to three possible cases.
	
	In the first case~$t$ is equal to~$w_2$ or~$w_3$. Without loss of generality assume~$t = w_2$. To further increase the size of the component~$w_3$ must now connect to a new vertex via a bridge edge. However, any edge in an FSG is part of at least one cycle which means a bridge edge cannot be critical.
	
	In the second case~$t$ is part of a bad pair. Since~$c = \Delta$ we know that all neighbors of a bad pair vertex must be connecting vertices. Here this means that~$w_1$ is a connecting vertex. Consequently, the other bad pair vertex is either~$u$ or~$v$ since it must be adjacent to~$w_1$. But then~$w_2$ and~$w_3$ are also connecting vertices and~$t$ is adjacent to them. This leads to a graph with~$6$ vertices all of which have degree~$3$ (see Fig.~\ref{fig:maxdeg3b}).
	
	In the third case~$w_1$ is part of a bad pair and~$t$ is not. Since the other vertex of the bad pair must be connected to all three neighbors of~$w_1$ that vertex can only be~$w_2$ or~$w_3$. Without loss of generality assume that it is~$w_2$. This leaves~$w_3$ and~$t$ as the only vertices with degree less than~$3$ (see Fig.~\ref{fig:maxdeg3c}). If we connect them with an edge we again have a graph with~$6$ vertices all of which have degree~$3$. To increase the size of the component to more than~$6$ at least one of these two vertices must be connected to a seventh vertex~$s$ with an edge. This new edge again implies that one of the two endpoints must be in a bad pair. We already showed that~$t$ being part of a bad pair leads to a component of size~$6$. And if~$w_3$ was part of a bad pair, then we would need to connect~$s$ to~$w_1$ or~$w_2$ as these are the only vertices that could be the second vertex of the bad pair. But that is not possible because these vertices already have degree~$3$.
This only leaves the case where~$s$ is part of a bad pair. If~$s$ is adjacent to~$w_3$, then the other vertex of the bad pair must be~$u$ or~$v$ which is not possible since~$s$ cannot be adjacent to~$w_1$ or~$w_2$. If~$s$ is adjacent to~$t$, then~$w_1$ or~$w_2$ must be the other vertex of the bad pair. But that is also not possible because~$s$ cannot be adjacent to~$u$ or~$v$. Hence, we cannot connect to a seventh vertex and each connected component contains at most~$6$ vertices.
	
	The running time can be seen as follows: We can apply Rule~\ref{rrule:non_critical_edges} in~$\Oh(n)$ time on graphs with constant maximum degree. Then we can go through the~$\Oh(n)$ connected components and solve each in constant time. \qed
\end{proof}

\section{Problem Kernel}
\label{sec:kernel}

In this section we first show that~$\probshort$ does not admit a kernel of bit size~$\Oh(k^{c-\epsilon})$ unless \textsc{coNP}~$\subseteq$ \textsc{NP}/poly using a reduction from \textsc{$c$-Hitting Set}. We then derive a problem kernel for~$\probshort$ with~$\Oh(k^{c+2})$ vertices using a reduction to \textsc{$c + 2$-Hitting Set}.
Finally, we show a problem kernel of size~$\Oh(x^2 \cdot (k+c))$ where~$x$ is the number of vertices that are in at least one bad pair.

\subsection{Hitting Set Reductions}

We start by defining \textsc{$d$-Hitting Set}:
\probdefdec
{\textsc{$d$-Hitting Set}}
{A family~$\mathcal{A}$ of subsets of a universe~$U$ such that each subset has size at most~$d$, and a positive integer~$k$.}
{Is there a set~$H \subseteq U$ with~$|H| \leq k$ such that~$H$ contains at least one element of each set in~$\mathcal{A}$?}

We now define a reduction from \textsc{$d$-Hitting Set} to~$\probshort$ with~$c = d$ by turning the elements in each set into the connecting vertices of an FSG.

Let~$(U,\mathcal{A},k)$ be an instance of \textsc{$d$-Hitting Set}. We first add additional elements to the sets in~$\mathcal{A}$ such that each set has exactly size~$d$. We construct the graph~$G$ as follows: Each element of~$U$ becomes a vertex in~$G$. Additionally, for each set~$A\in \mathcal{A}$ we add two new vertices~$v_A$ and~$u_A$. For each vertex~$v\in A$ we add the edges~$\{v, v_A \}$ and~$\{v, u_A \}$ such that~$v_A$ and~$u_A$ form a bad pair with the vertices in~$A$ as their connecting vertices. Finally, we add all edges between vertices in~$U$ such that~$U$ becomes a clique. With this we obtain the~$\probshort$ instance~$(G,k)$ with~$c = d$.

\begin{lemma}\label{lem:from_hit_set_red_correct}
The reduction from \textsc{$d$-Hitting Set} to~$\probshort$ is correct.
\end{lemma}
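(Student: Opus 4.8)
The plan is to prove both directions of the equivalence $(U,\mathcal{A},k)\text{ is a yes-instance}\Leftrightarrow(G,k)\text{ is a yes-instance}$ by first establishing a tight characterization of the bad pairs of~$G$. Structurally, each set vertex~$v_A$ or~$u_A$ is adjacent exactly to the~$c$ vertices of~$A\subseteq U$, while a universe vertex~$w\in U$ is adjacent to all of~$U\setminus\{w\}$ together with those set vertices~$v_B,u_B$ for which~$w\in B$. The crucial claim is that \emph{every bad pair of~$G$ has common neighborhood exactly equal to some set~$A\in\mathcal{A}$}, viewed as a vertex set inside the clique~$U$. I would prove this by a short case analysis over the types of the two endpoints. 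Two universe vertices are adjacent and hence never form a bad pair. For a universe vertex~$w$ and a set vertex~$v_A$ that are nonadjacent (that is,~$w\notin A$), the common neighbors are~$N(w)\cap A=A$, which has size~$c$. For two set vertices their common neighborhood is~$A\cap B$, which can reach size~$c$ only if~$A=B$, again yielding~$A$. In every case the common neighborhood is precisely a full set of~$\mathcal{A}$.

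For the forward direction, given a hitting set~$H$ of size at most~$k$, I would show that~$G-H$ is~$c$-closed by taking the deletion set~$S\coloneq H\subseteq U$. Deleting vertices neither creates adjacencies nor enlarges common neighborhoods, so any bad pair of~$G-H$ is already a bad pair of~$G$; by the structural claim its common neighborhood in~$G$ is some~$A\in\mathcal{A}$, and hence its common neighborhood in~$G-H$ equals~$A\setminus H$, which has size at most~$c-1$ because~$H$ hits~$A$. Thus~$G-H$ contains no bad pair and is~$c$-closed.

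For the reverse direction, let~$S$ be a solution for~$(G,k)$. Each pair~$(v_A,u_A)$ is a bad pair of~$G$, so~$S$ must intersect~$\{v_A,u_A\}\cup A$. I would build~$H$ by keeping every universe vertex of~$S$ and replacing every set vertex~$v_A$ or~$u_A$ in~$S$ by an arbitrary element of its set~$A$; the resulting set has size at most~$|S|\le k$. For each~$A$, either~$S$ already contains an element of~$A$, which then lies in~$H$, or~$S$ contains~$v_A$ or~$u_A$, whose replacement lies in~$A\cap H$. In both cases~$H$ hits~$A$, so~$H$ is the desired hitting set of size at most~$k$.

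The main obstacle is the structural characterization rather than either implication. The naive expectation is that the only bad pairs are the intended pairs~$(v_A,u_A)$, but turning~$U$ into a clique also makes every universe vertex outside~$A$ a bad-pair partner of both~$v_A$ and~$u_A$. The argument must therefore verify that all of these additional bad pairs still have a full set~$A\in\mathcal{A}$ as their common neighborhood, so that hitting all sets destroys them simultaneously in the forward direction; conversely, only the canonical pairs~$(v_A,u_A)$ are needed to extract the hitting set in the reverse direction, which keeps that implication clean.
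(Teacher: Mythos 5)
Your proof is correct and follows essentially the same route as the paper: the same deletion-set/hitting-set translations in both directions, with your explicit three-case characterization of bad pairs being a slightly more detailed version of the paper's observation that every FSG has a bad-pair endpoint~$v_A\notin U$ whose neighborhood is exactly~$A$. The only point both you and the paper gloss over is that the element substituted for a set vertex (or chosen when~$S$ hits a padding element) should be an \emph{original} element of~$A$, but this is immediate since empty sets make the instance trivially negative.
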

\begin{proof}\label{proof:from_hit_set_red}
We need to show the following:
\begin{align*}
	(U,\mathcal{A},k) \text{ is a yes-instance } \Leftrightarrow (G,k) \text{ is a yes-instance.}
\end{align*}

$(\Rightarrow)$ Let~$S\subseteq U$ be a solution for~$(U,\mathcal{A},k)$. Any FSG~$H$ in~$G$ must have at least one bad pair vertex~$v_A$ that is not in~$U$ since all vertices in~$U$ are pairwise connected. By construction we have~$N(v_A) = A \subseteq V(H)$. Since~$S$ contains at least one vertex of~$A$ we know that~$G - S$ does not contain any FSGs. This means that~$S$ is a solution for~$(G,k)$.

$(\Leftarrow)$ Let~$S\subseteq V(G)$ be a solution for~$(G,k)$. We construct the set~$S'$ as follows: First we add any vertex in~$S$ that is also in~$U$ to~$S'$. Then for each vertex~$v$ in~$S$ that is not in~$U$ there are two cases. In the first case,~$v$ is a vertex that was added to a set~$A\in \mathcal{A}$ to increase its size to~$d$. In this case we add an arbitrary vertex from~$A$ to~$S'$. In the second case,~$v$ is a vertex that was added to create a bad pair with~$A\in \mathcal{A}$ as the connecting vertices. In this case we also add an arbitrary vertex in~$A$ to~$S'$. Clearly we have~$|S'| \leq k$.
For each set~$A\in \mathcal{A}$ there is at least one FSG in~$G$ and therefore there must be at least one vertex~$v\in S$ that is contained in that FSG. By construction, at least one vertex of~$S'$ is contained in~$A$. THence,~$S'$ is a solution for~$(U,\mathcal{A},k)$. \qed
\end{proof}

From this reduction we can obtain two results. First, the constructed graph is a split graph which means that~$\probshort$ is NP-hard on split graphs.

\begin{corollary}
	For every~$c \geq 2$,~$\probshort$ is NP-hard on split graphs.
\end{corollary}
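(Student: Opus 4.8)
The plan is to read this off directly from the reduction defined just before the statement, together with Lemma~\ref{lem:from_hit_set_red_correct}, which already establishes that the map from \textsc{$d$-Hitting Set} to \probshort{} with $c=d$ is a correct reduction. So the only two things left to argue are (i) that the source problem is NP-hard for the relevant value of $d$, and (ii) that the graph $G$ produced by the reduction is in fact a split graph. Since the reduction runs in polynomial time, combining these with the correctness lemma immediately yields NP-hardness of \probshort{} on split graphs for the corresponding $c$.

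First I would handle the hardness of the source problem. For $d=c\ge 2$, \textsc{$d$-Hitting Set} is NP-hard: the case $d=2$ is exactly \textsc{Vertex Cover}, and for any fixed $d\ge 2$ one obtains hardness by padding each set up to size $d$ with fresh dummy elements (or simply by the standard fact that \textsc{$d$-Hitting Set} generalizes \textsc{Vertex Cover}). Because the paper treats $c$ as a constant, fixing $d=c$ gives a fixed-$d$ NP-hard problem for every $c\ge 2$, which is what we need.

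The second and main point is the structural claim that the constructed $G$ is a split graph. Here I would exhibit the required partition explicitly: let $V_1 = U$ and $V_2 = \{v_A, u_A : A\in\mathcal{A}\}$. By construction $U$ is made into a clique, so $G[V_1]$ is a clique. For $G[V_2]$, observe that every newly added vertex $v_A$ or $u_A$ has all of its edges going to vertices of $A\subseteq U$; in particular no edges are ever placed between two bad-pair vertices. Thus $v_A$ and $u_A$ are nonadjacent (indeed they must be, being a bad pair), and bad-pair vertices of distinct sets are likewise nonadjacent. Hence $G[V_2]$ is an independent set, and $(V_1,V_2)$ witnesses that $G$ is split.

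I do not anticipate a real obstacle here: the argument is essentially a matter of re-inspecting the reduction and verifying the clique/independent-set partition. The only point requiring mild care is confirming that the bad-pair vertices never acquire edges among themselves, which follows from the fact that the reduction only adds edges of the form $\{v, v_A\}$ and $\{v, u_A\}$ with $v\in U$, plus the clique edges inside $U$. Once this is noted, the corollary follows at once.

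\begin{proof}
By Lemma~\ref{lem:from_hit_set_red_correct}, the construction is a correct polynomial-time reduction from \textsc{$d$-Hitting Set} to \probshort{} with $c=d$. For every fixed $d\ge 2$, \textsc{$d$-Hitting Set} is NP-hard (the case $d=2$ is \textsc{Vertex Cover}; larger $d$ follows by padding each set to size exactly $d$ with fresh dummy elements). It remains to verify that the graph $G$ produced by the reduction is a split graph. Consider the partition $V(G)=V_1\cup V_2$ with $V_1=U$ and $V_2=\{v_A,u_A\mid A\in\mathcal{A}\}$. By construction all edges inside $U$ are present, so $G[V_1]$ is a clique. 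Every added vertex $v_A$ or $u_A$ is adjacent only to vertices of $A\subseteq U$, and no edge is ever placed between two vertices of $V_2$; hence $G[V_2]$ contains no edges and is an independent set. Thus $G$ is split, and \probshort{} is NP-hard on split graphs for every $c\ge 2$. \qed
\end{proof}
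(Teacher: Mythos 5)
Your proof is correct and matches the paper's intended argument: the paper derives this corollary directly from the \textsc{$d$-Hitting Set} reduction and Lemma~\ref{lem:from_hit_set_red_correct}, observing (without spelling out the details) that the constructed graph is split because $U$ is made a clique and the added bad-pair vertices form an independent set. Your explicit verification of the partition $(U,\{v_A,u_A\mid A\in\mathcal{A}\})$ and of the NP-hardness of the source problem fills in exactly the steps the paper leaves implicit.
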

Let us remark that the case~$c=1$ can be solved in polynomial time on split graphs~\cite{CKOY18}.
The second result is a lower bound for the bit size of a problem kernel for~$\probshort$. For this we can use the result that \textsc{$d$-Hitting Set} does not have a compression with bit size~$\Oh(k^{d - \epsilon})$ for any~$\epsilon > 0$ unless \textsc{coNP}~$\subseteq$ \textsc{NP}/poly~\cite{PA16}. Since~$k$ does not change in the reduction and we have~$c = d$ we immediately obtain the following result for~$\probshort$.

\begin{theorem}\label{lem:kernel_lb}
	$\probname$ with~$c \geq 2$ does not have a kernel with bit size~$\Oh(k^{c - \epsilon})$ for any~$\epsilon > 0$ unless \textsc{coNP}~$\subseteq$ \textsc{NP}/poly.
\end{theorem}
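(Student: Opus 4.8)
The plan is to transfer the known compression lower bound for \textsc{$d$-Hitting Set} across the reduction from the previous subsection, exploiting that this reduction keeps the parameter~$k$ untouched and fixes~$c=d$. First I would recall that the reduction behind Lemma~\ref{lem:from_hit_set_red_correct} maps an instance~$(U,\mathcal{A},k)$ of \textsc{$d$-Hitting Set} to a \probshort{} instance~$(G,k)$ with~$c=d$ in polynomial time, and, crucially, that the target parameter is literally the same~$k$. Hence this reduction is a polynomial parameter transformation from \textsc{$d$-Hitting Set} (instantiated with~$d=c$) to \probshort{}, with no blow-up in the parameter.

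Next I would argue by contradiction. Suppose \probshort{} admitted a kernelization whose output has bit size~$\Oh(k^{c-\epsilon})$ for some~$\epsilon>0$. A kernelization is in particular a (self-)compression. Composing the polynomial parameter transformation above with this kernelization yields a compression of \textsc{$d$-Hitting Set}: given~$(U,\mathcal{A},k)$, one first builds~$(G,k)$ in polynomial time and then runs the assumed kernelization on~$(G,k)$. Because the parameter is preserved exactly and~$c=d$, the bit size of the resulting instance is~$\Oh(k^{c-\epsilon}) = \Oh(k^{d-\epsilon})$. This would be a compression of \textsc{$d$-Hitting Set} of bit size~$\Oh(k^{d-\epsilon})$, contradicting the known lower bound that no such compression exists unless \textsc{coNP}~$\subseteq$~\textsc{NP}/poly~\cite{PA16}.

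The only point requiring care is the interface of the composition: I must ensure that the composed pipeline really is a valid compression of \textsc{$d$-Hitting Set}, i.e.\ that the answer to the original Hitting Set instance can be recovered from the kernelized \probshort{} instance. This is immediate because the reduction is correctness-preserving (Lemma~\ref{lem:from_hit_set_red_correct}) and any kernelization preserves the answer of its input, so the composed map is answer-preserving. Since the parameter~$k$ is carried over \emph{without} any polynomial blow-up, the exponent~$c$ transfers verbatim to the Hitting Set side; this exact preservation of~$k$ (as opposed to a mere polynomial relationship) is precisely what makes the matching exponent work, and it is the main thing to get right in the argument.
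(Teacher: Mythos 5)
Your proposal is correct and matches the paper's argument: both transfer the~$\Oh(k^{d-\epsilon})$ compression lower bound for \textsc{$d$-Hitting Set} through the reduction of Lemma~\ref{lem:from_hit_set_red_correct}, using that the parameter~$k$ is preserved exactly and that~$c=d$. You merely spell out the composition with a hypothetical kernelization more explicitly than the paper, which states the same conclusion in one sentence.
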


Next, we want to show that~$\probshort$ admits a kernel with~$\Oh(k^{c + 2})$ vertices. For this we first prove the following lemma.

\begin{lemma}\label{lem:many_conn_verts}
	Let~$(G,k)$ be an instance of~$\probshort$ and let~$u, v$ be a bad pair in~$G$ with at least~$k + c$ connecting vertices. Any solution for~$(G,k)$ must contain at least one of~$u$ and~$v$.
\end{lemma}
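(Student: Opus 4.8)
The plan is to prove the statement by contradiction via a direct counting argument on the connecting vertices. I would suppose that $S$ is a solution for~$(G,k)$ --- so $|S|\le k$ and $G-S$ is $c$-closed --- and assume toward a contradiction that $S$ contains neither $u$ nor $v$.

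First I would record two elementary observations. Since neither $u$ nor $v$ lies in $S$, both survive in~$G-S$, and since deleting vertices never adds edges, $u$ and $v$ remain nonadjacent in~$G-S$. Thus $\{u,v\}$ is still a nonadjacent pair in~$G-S$, and it suffices to count how many of their common neighbors survive the deletion.

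The core step is the observation that a connecting vertex~$w$ of the pair $\{u,v\}$ stops being a common neighbor in~$G-S$ only if $w$ itself is deleted: the edges $\{u,w\}$ and $\{v,w\}$ cannot otherwise disappear, as their endpoints $u$ and $v$ are not removed. Hence the number of connecting vertices destroyed is at most $|S\cap N(u)\cap N(v)|\le |S|\le k$. Because $u$ and $v$ start with at least $k+c$ connecting vertices, at least $(k+c)-k=c$ of them remain common neighbors in~$G-S$. Consequently $u$ and $v$ still form a bad pair in~$G-S$, so $G-S$ is not $c$-closed, contradicting that $S$ is a solution. Therefore every solution must contain at least one of $u$ and $v$.

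I expect no real obstacle here, as the argument reduces to a one-line pigeonhole count. The only point requiring care is the bookkeeping in the core step, namely that each vertex of $S$ can eliminate at most one connecting vertex of the pair, which relies precisely on the assumption $u,v\notin S$.
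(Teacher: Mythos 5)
Your argument is correct and is essentially identical to the paper's proof: both assume a solution avoiding $u$ and $v$, observe that at most $k$ of the at least $k+c$ connecting vertices can be deleted, and conclude that $u$ and $v$ still form a bad pair in $G-S$, a contradiction. No issues.
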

\begin{proof}
  Assume that~$(G,k)$ has a solution~$S$ that does not contain~$u$ or~$v$. Then,~$u$ and~$v$ still have at least~$c$ connecting vertices in~$G - S$ as~$S$ contains at most~$k$ connecting vertices of~$u$ and~$v$. Hence,~$u$ and~$v$ form an FSG with these connecting vertices in~$G - S$. This contradicts~$S$ being a solution for~$(G,k)$. \qed
\end{proof}

We now define a reduction from~$\probshort$ to \textsc{$d$-Hitting Set} with~$d = c + 2$ by turning (most of) the FSGs into sets of the \textsc{$d$-Hitting Set} instance.
Let~$(G,k)$ be an instance of~$\probshort$. The universe~$U$ consists of all vertices~$V(G)$. For each bad pair~$u, v$ in~$G$ we do the following: If~$u$ and~$v$ have less than~$k + c$ common neighbors, then, for each FSG where~$u, v$ is the bad pair, we add a new subset to~$\mathcal{A}$ that contains all vertices in the FSG. Otherwise, then we choose an arbitrary subset~$C$ containing exactly~$k + c$ of these common neighbors. Then, for each FSG where~$u, v$ is the bad pair and the connecting vertices are all contained in~$C$, we add a new subset to~$\mathcal{A}$ that contains all vertices in the FSG.
This gives the \textsc{$d$-Hitting Set} instance~$(U,\mathcal{A},k)$ with~$d = c + 2$.

\begin{lemma}\label{lem:to_hit_set_red}
The reduction from~$\probshort$ to \textsc{$d$-Hitting Set} is correct.
\end{lemma}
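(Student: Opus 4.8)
The plan is to prove the biconditional by showing that a solution to one instance transfers to the other, paying careful attention to the fact that not every FSG of $G$ gives rise to a set in $\mathcal{A}$. The key structural observation, provided by Lemma~\ref{lem:many_conn_verts}, is that whenever a bad pair $u,v$ has at least $k+c$ common neighbors, any solution of size at most $k$ must delete $u$ or $v$; this is precisely why we are allowed to throw away all but $k+c$ of the connecting vertices of such a pair when building~$\mathcal{A}$.

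First I would prove the forward direction ($\Rightarrow$). Let $S$ be a solution for $(G,k)$, so $G-S$ is $c$-closed and $|S|\le k$. I claim $S$ is a hitting set for $\mathcal{A}$. Each set $A\in\mathcal{A}$ equals $V(H)$ for some FSG $H$ of $G$ with bad pair $u,v$ and $c$ connecting vertices. Since $G-S$ contains no FSG, $S$ must contain at least one vertex of $V(H)=A$, so $S$ hits $A$. Hence $S$ is a size-$\le k$ hitting set, and $(U,\mathcal{A},k)$ is a yes-instance.

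The backward direction ($\Leftarrow$) is the one requiring the structural lemma, and it is where I expect the main obstacle to lie. Let $H$ be a hitting set of size at most $k$; I want to argue $G-H$ is $c$-closed, equivalently that $H$ hits every FSG of $G$, not merely those that were encoded into $\mathcal{A}$. So suppose for contradiction some FSG $H'$ with bad pair $u,v$ survives in $G-H$. If $u,v$ had fewer than $k+c$ common neighbors in $G$, then \emph{every} FSG on this bad pair was added to $\mathcal{A}$, in particular the set $V(H')$, which $H$ must hit---contradiction. The delicate case is when $u,v$ have at least $k+c$ common neighbors: here only the FSGs whose connecting vertices lie in the chosen subset $C$ (with $|C|=k+c$) were added to $\mathcal{A}$. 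I would apply Lemma~\ref{lem:many_conn_verts}, but it speaks about solutions of $(G,k)$, whereas $H$ is only a hitting set; the correct move is to observe directly that since $|H|\le k < k+c=|C|$, the set $C\setminus H$ is nonempty and in fact contains at least $c$ vertices, so $u,v$ together with any $c$ of these surviving common neighbors form an FSG whose vertex set was placed in $\mathcal{A}$. That set is not hit by $H$ (none of its connecting vertices nor $u,v$ lie in $H$), again a contradiction. Thus $H$ hits every FSG, so $G-H$ is $c$-closed and $(G,k)$ is a yes-instance. Since $k$ is unchanged throughout, this establishes correctness.
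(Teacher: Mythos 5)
Your proposal is correct and follows essentially the same route as the paper: the forward direction is identical, and in the backward direction both proofs use the pigeonhole observation that a hitting set of size at most $k$ cannot avoid $u$, $v$, and all $k+c$ vertices of the chosen set $C$ simultaneously. Your version merely phrases this as a direct contradiction (exhibiting an unhit set of $\mathcal{A}$ built from $c$ surviving vertices of $C$) rather than first concluding that the hitting set must contain $u$ or $v$, and your explicit remark that Lemma~\ref{lem:many_conn_verts} must be re-argued for hitting sets rather than invoked verbatim is a point the paper glosses over with ``a similar argument.''
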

\begin{proof}\label{proof:to_hit_set_red}
We need to show the following:
\begin{align*}
	(G,k) \text{ is a yes-instance } \Leftrightarrow (U,\mathcal{A},k) \text{ is a yes-instance.}
\end{align*}

($\Rightarrow$) Let~$S\subseteq V(G)$ be a solution for~$(G,k)$.~$S$ contains at least one vertex from each FSG in~$G$ and since each set in~$\mathcal{A}$ corresponds to the vertex set of some FSG in~$G$ we know that~$S$ must also be a solution for~$(U,\mathcal{A},k)$.

($\Leftarrow$) Let~$S\subseteq U$ be a solution for~$(U,\mathcal{A},k)$. We show that~$S$ is also a solution for~$(G,k)$. Let~$H$ be an FSG in~$G$ with the bad pair~$u, v$. If~$u, v$ have less than~$c + k$ common neighbors, then~$\mathcal{A}$ contains an edge with the vertex set of~$H$ which means that~$H$ is covered by~$S$. If~$u, v$ have at least~$c + k$ common neighbors, then we can use a similar argument as in Lemma~\ref{lem:many_conn_verts}. We cannot cover all sets in~$\mathcal{A}$ that correspond to FSG where~$u, v$ is the bad pair with only~$k$ vertices, unless at least one of those vertices is~$u$ or~$v$. That means~$S$ must contain~$u$ or~$v$ and it therefore covers~$H$. \qed
\end{proof}

René van Bevern showed that \textsc{$d$-Hitting Set} has an \emph{expressive} kernel with~$\Oh(k^d)$ elements and sets~\cite{DBLP:journals/algorithmica/Bevern14}. Expressive kernels are defined as follows:
\begin{definition}[\cite{DBLP:journals/algorithmica/Bevern14}]\label{def:expressive_kernel}
	A kernelization algorithm for \textsc{$d$-Hitting Set} is \emph{expressive} if, given an instance~$(U,\mathcal{A},k)$, it outputs an instance~$(U',\mathcal{A}',k')$ such that
	\begin{enumerate}
		\item $U' \subseteq U$ and~$\mathcal{A}' \subseteq \mathcal{A}$,
		\item any vertex set of size at most~$k$ is a minimal hitting set for~$(U,\mathcal{A})$ if and only if it is a minimal hitting set for~$(U',\mathcal{A}')$, and 
		\item  it outputs a certificate for~$(U',\mathcal{A}',k')$ being yes if and only if~$(U,\mathcal{A},k)$ is.
	\end{enumerate}
\end{definition}
We can now use this result to obtain the following kernel for~$\probshort$.

\begin{theorem}
	For~$c \geq 2$,~$\probname$ admits a kernel with~$\Oh(k^{c + 2})$ vertices that can be computed in~$\Oh(n^3 + n^2 m)$ time.
\end{theorem}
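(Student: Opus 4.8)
The plan is to chain the two reductions of this section with van~Bevern's expressive kernelization and then to map the shrunken hitting-set instance \emph{back to an induced subgraph of the original graph}, rather than back through the reduction of Lemma~\ref{lem:from_hit_set_red_correct} (which would produce a $(c+2)$-CVD instance, i.e.\ the wrong problem). Concretely, I would first apply Lemma~\ref{lem:to_hit_set_red} to turn $(G,k)$ into a \textsc{$(c+2)$-Hitting Set} instance $(U,\mathcal A,k)$ with $U=V(G)$, in which every set of $\mathcal A$ is the vertex set of an FSG of $G$. Applying the expressive kernelization of~\cite{DBLP:journals/algorithmica/Bevern14} yields an instance $(U',\mathcal A',k)$ with $\Oh(k^{c+2})$ sets and elements which, by Definition~\ref{def:expressive_kernel}, is a \emph{subinstance}, i.e.\ $U'\subseteq U$ and $\mathcal A'\subseteq\mathcal A$. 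I then let $V'$ be the union $\bigcup_{A\in\mathcal A'}A\subseteq V(G)$ and output $G[V']$. Since every set has at most $c+2$ elements, $|V'|\le(c+2)\cdot|\mathcal A'|=\Oh(k^{c+2})$ because $c$ is constant.

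The forward direction of correctness is immediate: if $S$ solves $(G,k)$ then $G-S$ is $c$-closed, and as $c$-closedness is hereditary its induced subgraph $G[V'\setminus S]=(G-S)[V'\setminus S]$ is $c$-closed too, so $S\cap V'$ solves $(G[V'],k)$. The backward direction is the heart of the argument. Let $S'\subseteq V'$ solve $(G[V'],k)$. Since $\mathcal A'\subseteq\mathcal A$, every $A\in\mathcal A'$ is the vertex set of an FSG of $G$ with some bad pair $\{a,b\}$ and $c$ connecting vertices, and all $c+2$ of them lie in $V'$; passing to the induced subgraph preserves the nonedge between $a$ and $b$ as well as all edges from $a$ and $b$ to the connecting vertices, so $A$ still induces an FSG in $G[V']$. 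Because $G[V']-S'$ is $c$-closed, $S'$ must meet every such $A$; hence $S'$ is a hitting set for $(U',\mathcal A')$ of size at most $k$.

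From here I would shrink $S'$ to a \emph{minimal} hitting set $S''\subseteq S'$ of $(U',\mathcal A')$. Then $|S''|\le k$, and property~(2) of Definition~\ref{def:expressive_kernel} guarantees that $S''$ is also a minimal, and in particular an ordinary, hitting set of the original instance $(U,\mathcal A)$. By the backward direction of Lemma~\ref{lem:to_hit_set_red}, any hitting set of $(U,\mathcal A)$ of size at most $k$ is a solution for $(G,k)$, so $(G,k)$ is a yes-instance. Together with the forward direction this shows that $(G[V'],k)$ is an equivalent instance of size $\Oh(k^{c+2})$, as claimed.

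The remaining, and I expect most delicate, point is the running time, because a bad pair may spawn $\binom{k+c}{c}$ FSGs and hence $\mathcal A$ may naively contain up to $\Oh(n^2k^c)$ sets. I would first apply the usual shortcut: if the size bound $\Oh(k^{c+2})$ already exceeds $n$, output $(G,k)$ unchanged; otherwise $k^{c+2}\le n$, so in particular $k^c\le n$ and the number of sets is $\Oh(n^2k^c)=\Oh(n^3)$. Computing all bad pairs together with their common neighborhoods costs $\Oh(n^2+nm)$ time, e.g.\ by accumulating over all paths of length two; building $\mathcal A$ as in Lemma~\ref{lem:to_hit_set_red} (where each bad pair contributes only $\Oh(k^c)$ sets, since its connecting vertices are capped at $k+c$) then stays within $\Oh(n^3)$; and the expressive kernelization runs in time linear in the total size of the instance. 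Extracting $V'$ and the induced subgraph $G[V']$ is cheap, so the overall running time is $\Oh(n^3+n^2m)$. The main care needed here is the bookkeeping to verify that each individual step respects this budget; correctness of the capping of connecting vertices is already subsumed by Lemma~\ref{lem:to_hit_set_red}.
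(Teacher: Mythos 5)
Your proposal is correct and follows essentially the same route as the paper: reduce to \textsc{$(c+2)$-Hitting Set}, apply van Bevern's expressive kernelization, and return the induced subgraph of $G$ on the surviving elements (the paper takes $G[U']$ where you take $G[\bigcup_{A\in\mathcal A'}A]$, which makes no difference to correctness or to the $\Oh(k^{c+2})$ bound), with the same $n>k^{c+2}$ shortcut and the same accounting for the running time. Your explicit step of shrinking the solution to a \emph{minimal} hitting set before invoking property~(2) of Definition~\ref{def:expressive_kernel} is, if anything, slightly more careful than the paper's own wording of the backward direction.
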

\begin{proof}
	Let~$(G,k)$ be a~$\probshort$ instance, let~$(U,\mathcal{A},k)$ be the \textsc{$c + 2$-Hitting Set} instance constructed with the reduction we described above, and let~$(U',\mathcal{A}',k')$ be the instance obtained by using the kernelization algorithm from René van Bevern~\cite{DBLP:journals/algorithmica/Bevern14}. Since~$k$ does not change in the kernelization we can replace~$k'$ by~$k$. Our kernel is now the~$\probshort$ instance~$(G',k)$ that is constructed by removing any vertex in~$V(G)\setminus U'$ and their incident edges from~$G$.
	
We now need to show the following to show that~$(G',k)$ is a kernel.
\begin{align*}
	(G,k) \text{ is a yes-instance } \Leftrightarrow (G',k) \text{ is a yes-instance.}
\end{align*}

($\Rightarrow$) Let~$S$ be a solution for~$(G,k)$. Since the class of~$c$-closed graphs is hereditary and we only deleted vertices to obtain~$G'$ from~$G$, it is clear that~$S$ is also a solution for~$(G',k)$.

($\Leftarrow$) Let~$S$ be a minimal solution for~$(G',k)$. Each set in~$H\in \mathcal{A}'$ also exists in~$\mathcal{A}$ by the first property in Definition~\ref{def:expressive_kernel}. Hence,~$H$ is the vertex set of some FSG in~$G$. Since we did not remove any vertices in~$H$ when creating~$G'$, we know that~$H$ is also the vertex set of some FSG in~$G'$. This means that~$S$ is a solution for~$(U',\mathcal{A}',k)$. According to the second property in Definition~\ref{def:expressive_kernel},~$S$ is also a solution for~$(U,\mathcal{A},k)$. Finally, the proof of Lemma~\ref{lem:to_hit_set_red} shows that~$S$ is also a solution for~$(G,k)$.

We have now shown that~$(G',k)$ is a kernel and since the vertex set~$V(G')$ is equal to~$U'$ we know that~$G'$ has~$\Oh(k^{c + 2})$ vertices. We now only need to show the running time of the kernelization. The kernelization consists of three main steps. First, the reduction from~$(G,k)$ to~$(U,\mathcal{A},k)$. Second, the expressive kernelization of~$(U,\mathcal{A},k)$. Finally, creating the induced subgraph~$G' = G[U']$. Observe that if~$n \leq k^{c + 2}$ the kernelization can simply output~$(G,k)$, hence we assume~$n > k^{c + 2}$.
For the first step, we start by looking at each pair of vertices~$u$ and~$v$ and enumerating their common neighbors in~$\Oh(n\cdot m)$ time. If~$u$ and~$v$ have less than~$c$ common neighbors we do not need to do anything. Otherwise, we construct the set~$C$ of at most~$k+c$ common neighbors, and then add~$\Oh((k + c)^c)$ sets of size~$c + 2$ to~$\mathcal{A}$. Since $c$ is constant and we can assume that~$n > k^{c + 2}$ we can say that we only add~$\Oh(n)$ sets of constant size to~$\mathcal{A}$ for each pair of vertices in~$G$. This bounds the size of~$\mathcal{A}$ to~$\Oh(n^3)$ and the running time of the first step to~$\Oh(n^2 m)$.
The second step can be done in~$\Oh(n + |\mathcal{A}|)$ time for constant~$d$~\cite{DBLP:journals/algorithmica/Bevern14}. Combined with the bound on~$|\mathcal{A}|$ from above we get a running time of~$\Oh(n^3)$ for the second step.
The final step can be done in~$\Oh(n + m)$ time but since both terms are dominated by the running times of the first two steps we get the total running time of~$\Oh(n^3 + n^2 m)$. \qed
\end{proof}

\subsection{Kernel with parameter~$x$}

To obtain a kernel with polynomial dependence on~$c$ we use a new parameter~$x:=|X|$ where~$X$ is the set of vertices in~$V(G)$ that are part of at least one bad pair in~$G$. 
With this parameter we can obtain a kernel with~$\Oh(x^2 \cdot (k+c))$ vertices. To obtain this kernel we first introduce a new reduction rule and then show that exhaustively applying this rule to a graph~$G$ will reduce its size to the desired bound.
We may assume~$x > k$ since any instance with~$x \leq k$ can be trivially solved by removing all vertices of~$X$.
 We say that a bad pair is \emph{heavy} if it has more than~$k + c$ connecting vertices in~$V(G)\setminus X$. Otherwise, the bad pair is \emph{weak}.

\begin{rrule}\label{rrule:unmarked}
	Mark all vertices  in~$X$ and  all vertices that are connecting vertices of at least one weak bad pair. Then, remove an arbitrary unmarked vertex.
\end{rrule}

\begin{lemma}
	Rule~\ref{rrule:unmarked} is correct.
\end{lemma}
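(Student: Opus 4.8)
The plan is to prove the two directions of the equivalence $(G,k)\text{ is a yes-instance} \Leftrightarrow (G-w,k)\text{ is a yes-instance}$, where $w$ denotes the arbitrary unmarked vertex that Rule~\ref{rrule:unmarked} removes. The forward direction I expect to be immediate: since $c$-closedness is hereditary and $G-w$ is an induced subgraph of $G$, any solution $S$ for $(G,k)$ yields the solution $S\cap V(G-w)$ for $(G-w,k)$, because $(G-w)-(S\cap V(G-w))$ is an induced subgraph of the $c$-closed graph $G-S$ and hence is itself $c$-closed. This direction uses nothing about the marking.

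The interesting direction is $(\Leftarrow)$, and here I would take a solution $S'$ for $(G-w,k)$ and argue that the \emph{same} set is already a solution for $(G,k)$, i.e.\ that $G-S'$ is $c$-closed. Suppose toward a contradiction that $G-S'$ contains a bad pair $a,b$. Since $(G-w)-S' = (G-S')-w$ is $c$-closed, this bad pair cannot already be present there, so reinserting $w$ must be what creates it. As deleting or inserting a single vertex only changes the common neighborhood of a pair by that one vertex and never changes adjacency among the remaining vertices, $w$ must play an active role: either $w\in\{a,b\}$, or $w$ is a common neighbor of $a$ and $b$ that raises their number of common neighbors from $c-1$ to $c$.

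The heart of the argument is ruling out both cases using the fact that $w$ is unmarked. In the first case $w\in\{a,b\}$ lies in a bad pair of $G-S'$; since nonadjacent pairs stay nonadjacent and common neighborhoods only grow when passing from the induced subgraph $G-S'$ back to $G$, this pair is also a bad pair of $G$, so $w\in X$ — contradicting that unmarked vertices are not in $X$. In the second case $a,b$ is a bad pair of $G$ having $w\notin X$ as a connecting vertex; since $w$ is unmarked it is a connecting vertex of no \emph{weak} bad pair, so $a,b$ must be \emph{heavy}, i.e.\ it has more than $k+c$ connecting vertices in $V(G)\setminus X$. Deleting the single vertex $w$ therefore still leaves $a,b$ as a bad pair in $G-w$ with at least $k+c$ connecting vertices. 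I expect this counting step to be the point that needs care, but it then feeds directly into Lemma~\ref{lem:many_conn_verts}: applied to the instance $(G-w,k)$, the lemma forces every solution, in particular $S'$, to contain $a$ or $b$, which contradicts $a,b$ being a bad pair of $G-S'$. The main obstacle is thus not a hard estimate but correctly tracking how the unmarked status of $w$ translates, via the heavy/weak dichotomy, into enough surviving connecting vertices in $G-w$ to invoke Lemma~\ref{lem:many_conn_verts}.
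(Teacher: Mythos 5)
Your proposal is correct and follows essentially the same route as the paper's proof: the forward direction via heredity, and the backward direction by contradiction, splitting on whether the removed unmarked vertex is a bad-pair vertex (ruled out since it is not in $X$) or a connecting vertex (forcing the pair to be heavy, so that after deleting one vertex it still has at least $k+c$ connecting vertices and Lemma~\ref{lem:many_conn_verts} applies). The counting step you flag as needing care is exactly the step the paper relies on, and your handling of it is correct.
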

\begin{proof}
Let~$(G,k)$ be an instance of~$\probshort$ and let~$(G',k)$ be the instance that is obtained by Rule~\ref{rrule:unmarked} removing vertex~$v$ from~$G$. We now need to show that
$(G,k) \text{ is a yes-instance} \Leftrightarrow (G',k) \text{ is a yes-instance.}$

$(\Rightarrow)$ Let~$S$ be a solution for~$(G,k)$ which means that~$G - S$ is~$c$-closed. Since the class of~$c$-closed graphs is hereditary, we know that~$(G - S) - \{ v \} = (G' - S)$ is also~$c$-closed. Hence,~$S\setminus \{ v \}$ is a solution for~$(G',k)$.

$(\Leftarrow)$ Let~$S'$ be a solution for~$(G',k)$. Assume towards a contradiction that~$G - S'$ is not~$c$-closed and therefore contains some FSG~$H$. If~$v\notin V(H)$, then~$H$ must also be a subgraph of~$G' - S'$ which contradicts~$S'$ being a solution for~$(G',k)$. If~$v\in V(H)$, then~$v$ must be a connecting vertex in~$H$ since it is not in~$X$. Moreover, the bad pair in~$H$ is a heavy bad pair in~$G$ since~$v$ was not marked by Rule~\ref{rrule:unmarked}. This means that the bad pair in~$H$ must be a bad pair in~$G'$ with at least~$k + c$ connecting vertices. Lemma~\ref{lem:many_conn_verts} now tells us that~$S'$ contains a vertex from this bad pair. This contradicts~$H$ being an FSG in~$G - S'$. \qed
\end{proof}

Exhaustively applying Rule~\ref{rrule:unmarked} leads to a kernel of the desired size.

\begin{lemma}\label{lem:kernel_x}
	$\probname$ has a problem kernel with~$\Oh(x^2 \cdot (k + c))$ vertices where~$x$ is the number of vertices that are part of at least one bad~pair.
\end{lemma}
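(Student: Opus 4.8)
The plan is to bound the number of vertices remaining after Rule~\ref{rrule:unmarked} has been applied exhaustively. After exhaustive application, every remaining vertex is marked, so it suffices to count the marked vertices. There are two kinds of marked vertices: the vertices in~$X$, of which there are exactly~$x$; and the connecting vertices of weak bad pairs. Since all vertices of~$X$ are already counted, I only need to bound the number of connecting vertices of weak bad pairs that lie in~$V(G)\setminus X$. Thus the whole estimate reduces to counting, for each weak bad pair, how many of its connecting vertices lie outside~$X$ and then summing over all weak bad pairs.

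First I would observe that the bad pair vertices themselves are by definition in~$X$, so they are already accounted for. Next, fix a weak bad pair~$u,v$. By definition of \emph{weak}, this pair has at most~$k + c$ connecting vertices in~$V(G)\setminus X$, so each weak bad pair contributes at most~$k+c$ newly marked vertices from outside~$X$. The remaining step is to bound the number of distinct weak bad pairs. A bad pair consists of two vertices of~$X$, so the number of bad pairs is at most~$\binom{x}{2} = \Oh(x^2)$. Combining these two bounds, the total number of connecting vertices of weak bad pairs that lie outside~$X$ is at most~$\Oh(x^2)\cdot (k+c) = \Oh(x^2\cdot(k+c))$. Adding the~$x$ vertices of~$X$ itself, and using the assumption~$x > k$ so that~$x = \Oh(x^2 \cdot (k+c))$, the total number of vertices in the reduced graph is~$\Oh(x^2\cdot(k+c))$.

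The main subtlety I expect is making sure the counting is not over- or under-estimated because a single vertex may be a connecting vertex of several weak bad pairs; this is handled correctly since we are summing an upper bound over all bad pairs, which can only overcount, so the stated bound remains valid. The other point to verify carefully is that a \emph{heavy} bad pair, which has more than~$k+c$ connecting vertices in~$V(G)\setminus X$, does not force us to keep those vertices: indeed, Rule~\ref{rrule:unmarked} only marks connecting vertices of \emph{weak} bad pairs, and the correctness lemma for Rule~\ref{rrule:unmarked} (relying on Lemma~\ref{lem:many_conn_verts}) guarantees that the unmarked connecting vertices of heavy bad pairs can be safely deleted. Hence no vertex outside~$X$ survives except those mandated by the~$\Oh(x^2)$ weak bad pairs, each contributing~$\Oh(k+c)$, which yields the claimed kernel size of~$\Oh(x^2\cdot(k+c))$ vertices.
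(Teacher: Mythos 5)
Your proposal is correct and follows essentially the same route as the paper: count the marked vertices after exhaustive application of Rule~\ref{rrule:unmarked}, bound the number of weak bad pairs by~$\binom{x}{2}$, and charge at most~$k+c$ connecting vertices outside~$X$ to each. Your version is in fact slightly more careful than the paper's in explicitly separating the~$x$ vertices of~$X$ from the connecting vertices outside~$X$ (matching the definition of \emph{weak}, which only restricts connecting vertices in~$V(G)\setminus X$), but this is a presentational refinement rather than a different argument.
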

\begin{proof}
	Let~$(G,k)$ be an instance of~$\probshort$ such that Rule~\ref{rrule:unmarked} cannot be applied anymore. This means that all vertices in~$G$ are marked and are therefore part of a bad pair or connecting vertices of at least one weak bad pair.
	Graph~$G$ can contain at most~$\binom{x}{2} = \frac{x\cdot (x-1)}{2} \in \Oh(x^2)$ weak bad pairs. Each of those can have at most~$k + c$ connecting vertices. Since each vertex is a connecting vertex of at least one of those pairs and~$|X| = x$, we obtain~$\Oh(x^2 \cdot (k + c))$ as an upper bound for the number of vertices in~$G$. \qed
\end{proof}

Since we can assume~$x > k$, we can also write the bound as~$\Oh(x^3+x^2 \cdot c)$.

\section{Easy Special Cases}
\label{sec:easy}
\paragraph{Unit Interval Graphs}
A graph~$G$ is an \emph{interval graph} if each vertex~$v$ corresponds to a closed interval~$I_v=[l(v),r(v)]$ on the real line and two vertices are connected with an edge if and only if their intervals overlap. A \emph{unit interval graph} is an interval graph where all intervals have the same length.  The \emph{depth} of an interval graph is the largest number of intervals that overlap in the same point. The depth is the same as the size of a largest clique in the graph. We assume that~$(v_1, v_2, \hdots, v_n)$ is an ordering of~$V(G)$ with nondecreasing starting points.

We now describe an algorithm that solves~$\probshort$ on unit interval graphs with depth at most~$c + 1$ in linear time~$\Oh(c\cdot n)$. For this we first analyze how FSGs look in such graphs. Figure~\ref{fig:intervalFSGex} shows an example FSG for~$c = 3$.

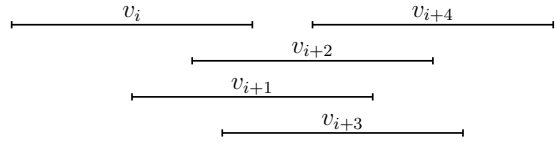
\begin{figure}[t]
    \centering
    \scalebox{0.8}{
    \begin{tikzpicture}[yscale=0.6]
    	\draw[thick] (0, 3) -- (4, 3);
    	\draw[thick] (0, 3.1) -- (0,2.9);
    	\draw[thick] (4, 3.1) -- (4,2.9);
        \node at (2,3.3) {\large $v_i$};
    	\draw[thick] (5, 3) -- (9, 3);
    	\draw[thick] (5, 3.1) -- (5,2.9);
    	\draw[thick] (9, 3.1) -- (9,2.9);
        \node at (7,3.3) {\large $v_{i + 4}$};
    	\draw[thick] (3, 2) -- (7, 2);
    	\draw[thick] (3, 2.1) -- (3,1.9);
    	\draw[thick] (7, 2.1) -- (7,1.9);
        \node at (5,2.3) {\large $v_{i + 2}$};
    	\draw[thick] (2, 1) -- (6, 1);
    	\draw[thick] (2, 1.1) -- (2,0.9);
    	\draw[thick] (6, 1.1) -- (6,0.9);
        \node at (4,1.3) {\large $v_{i + 1}$};
    	\draw[thick] (3.5, 0) -- (7.5, 0);
    	\draw[thick] (3.5, 0.1) -- (3.5,-0.1);
    	\draw[thick] (7.5, 0.1) -- (7.5,-0.1);
        \node at (5.5,0.3) {\large $v_{i + 3}$};
    \end{tikzpicture}
    }
    \caption{Example FSG for~$c = 3$ in a unit interval graph with depth~$c + 1 = 4$. The vertices~$v_i$ and~$v_{i + 4}$ form the bad pair while the other three vertices are the connecting vertices.}\label{fig:intervalFSGex}
\end{figure}

\begin{lemma}\label{lem:FSGstructure}
	Let~$H$ be an FSG in a unit interval graph~$G$ with depth~$\leq c + 1$. We then have~$V(H) = \{ v_i, v_{i + 1}, \hdots, v_{i + c}, v_{i + c + 1} \}$ for some~$1\leq i\leq n$ where~$v_i$ and~$v_{i + c + 1}$ form the bad pair and~$v_{i + 1}, \hdots, v_{i + c}$ are the connecting vertices.
\end{lemma}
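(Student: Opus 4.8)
The plan is to exploit two structural facts about unit interval graphs: that we may assume the canonical ordering $(v_1,\dots,v_n)$ by nondecreasing left endpoints, and that in such an ordering the neighborhoods are ``contiguous.'' Concretely, because all intervals have unit length, if $v_a$ and $v_b$ with $a<b$ are adjacent, then every $v_j$ with $a<j<b$ is adjacent to both $v_a$ and $v_b$; and the depth bound $\le c+1$ means no point is covered by more than $c+1$ intervals. I would first record these as the working lemmas and then argue about the shape of an arbitrary FSG $H$.

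\textbf{Main argument.} Let $H$ be an FSG, let $v_a,v_b$ (with $a<b$) be its bad pair, and let $W$ be its set of $c$ connecting vertices. Each $w\in W$ is adjacent to both $v_a$ and $v_b$, so the intervals $I_{v_a}$ and $I_{v_b}$ each overlap $I_w$; since $v_a,v_b$ are \emph{non}adjacent, $I_{v_a}$ and $I_{v_b}$ are disjoint, and $I_w$ must stretch across the gap, so every connecting interval contains the region between $I_{v_a}$ and $I_{v_b}$. Pick a point $p$ strictly between $r(v_a)$ and $l(v_b)$ (such a point exists because the intervals are disjoint and, for $w\in W$ to meet both, we have $l(v_b)\le r(v_a)$ would be contradicted --- I would verify $r(v_a)<l(v_b)$ cannot hold and instead the overlap forces $l(v_b)\le r(w)$ and $l(w)\le r(v_a)$, so $p$ can be chosen in $[l(v_b),r(v_a)]$). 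At such a point $p$, all $c$ connecting intervals overlap. The key counting step: the depth at $p$ is at least $c$ from the connecting vertices alone, and since the depth is bounded by $c+1$, at most one further interval can cover $p$. This forces the $c$ connecting vertices, together with the bad pair, to be \emph{consecutive} in the ordering, because any $v_j$ with $a<j<b$ is adjacent to both $v_a$ and $v_b$ (by the contiguity property) and hence also covers $p$, so it must already be one of the connecting vertices; otherwise we would exceed depth $c+1$.

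\textbf{Finishing.} From the above, $W=\{v_{a+1},\dots,v_{b-1}\}$ exactly, so $b-a-1=|W|=c$, i.e.\ $b=a+c+1$. Setting $i:=a$ gives $V(H)=\{v_i,v_{i+1},\dots,v_{i+c},v_{i+c+1}\}$ with $v_i,v_{i+c+1}$ the bad pair and $v_{i+1},\dots,v_{i+c}$ the connecting vertices, as claimed. I would also note $1\le i$ and $i+c+1\le n$ automatically from the indexing.

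\textbf{The main obstacle} I anticipate is making the ``contiguity forces consecutiveness'' step fully rigorous: I must rule out that the bad pair $v_a,v_b$ has \emph{extra} common neighbors outside $W$ that are not consecutive, and I must show no index strictly between $a$ and $b$ can fail to be a connecting vertex. Both reduce to the depth-$\le c+1$ bound applied at the witnessing point $p$: every vertex whose interval covers $p$ is a common neighbor of the bad pair and contributes to the depth at $p$, so there can be at most $c+1$ such vertices in total. Since $v_a$ and $v_b$ themselves do \emph{not} cover $p$ (their intervals are disjoint and lie on opposite sides of $p$), all depth at $p$ comes from common neighbors, capping them at $c+1$; combined with needing exactly $c$ connecting vertices and the contiguity property, this pins down the consecutive block precisely.
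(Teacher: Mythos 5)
Your overall strategy---pin down the FSG by counting interval depth at a witness point in the gap between the two bad-pair intervals---is the same as the paper's, but your choice of witness point creates an off-by-one error that the argument cannot absorb. You pick $p$ \emph{strictly} between $r(v_a)$ and $l(v_b)$, so that neither $I_{v_a}$ nor $I_{v_b}$ covers $p$; the depth bound then only tells you that at most $c+1$ intervals cover $p$, i.e.\ that the bad pair has at most $c+1$ common neighbors and that at most one ``further'' interval may cover $p$. You notice this slack yourself, but then conclude that any additional vertex ``would exceed depth $c+1$''---it would not: one extra interval gives depth exactly $c+1$, which is allowed. Consequently your argument excludes neither (i) a bad pair with $c+1$ common neighbors, in which case the FSG $H$ (which by definition has exactly $c$ connecting vertices) omits one of them and $V(H)$ is not a consecutive block, nor (ii) a single extra vertex ordered between $v_a$ and $v_b$ that does not belong to $V(H)$. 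The paper closes this gap by taking the witness points $r(v_i)$ and $l(v_j)$ instead: each of these is covered by one bad-pair interval \emph{and} by every common neighbor, so the depth bound caps the number of common neighbors at $c$ (tight), and any further vertex whose starting point lies strictly between $l(v_i)$ and $l(v_j)$ must---because the gap $[r(v_i),l(v_j)]$ is shorter than one unit length---cover $r(v_i)$ or $l(v_j)$ and is therefore excluded as well.

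A second, related problem is your appeal to the ``contiguity property.'' As you state it, it applies to \emph{adjacent} $v_a,v_b$, but the bad pair is nonadjacent, and for a nonadjacent pair it is false in general that every $v_j$ with $a<j<b$ is adjacent to both: a unit interval can start just after $l(v_a)$ and end before $l(v_b)$, making $v_j$ adjacent to $v_a$ but not to $v_b$. Such a $v_j$ need not cover your interior point $p$ at all, so your depth count never sees it; it does cover $r(v_a)$, which is exactly why the paper's witness points work. (Also, the parenthetical in which you assert that $r(v_a)<l(v_b)$ ``cannot hold'' and that $p$ can be chosen in $[l(v_b),r(v_a)]$ is backwards: disjointness of the two closed intervals forces $r(v_a)<l(v_b)$, and the set $[l(v_b),r(v_a)]$ is then empty.)
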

\begin{proof}\label{proof:FSGstructure}
	Let~$v_i$ be the vertex of the bad pair in~$H$ that has the smaller starting point. We know that they cannot have the same starting point because their intervals do not overlap. Let~$v_j$ be the other vertex of the bad pair. We know that the intervals of the connecting vertices must overlap with both~$v_i$ and~$v_j$. Since all interavls have the same length and~$r(v_i) < l(v_j)$ we know that each connecting vertex~$v_h$ must start after~$l(v_i)$ and end before~$r(v_j)$.
	
	Since we must also have~$l(v_h) \leq r(v_i)$ we know that the starting points of each connecting vertex are between the starting points of the bad pair vertices, i.e.~$l(v_i) < l(v_h) < l(v_j)$ for each connecting vertex~$v_h$. To prove the lemma we now just need to show that besides exactly~$c$ connecting vertices there are no other vertices that have a starting point between the starting points of the bad pair.
	
	We know that there are at least~$c$ connecting vertices. We also know that the interval of each connecting vertex must contain the point~$r(v_i)$ and the point~$l(v_j)$. That means for each of those points we have exactly~$c + 1$ intervals containing that point. This shows that there cannot be more than~$c$ connecting vertices. Now let us assume towards a contradiction that there is another interval~$v_h$ with~$l(v_i) < l(v_h) < l(v_j)$ that is not a connecting vertex of the bad pair~$v_i$,~$v_j$. We know that~$v_h$ cannot contain the points~$r(v_i)$ and~$l(v_j)$. However, the only way this would be possible is if we have~$l(v_i) < l(v_h)$ and~$r(v_h) < l(v_j)$. That is a contradiction because we know that the distance between~$r(v_i)$ and~$l(v_j)$ must be at most the length of an interval because we know that the connecting vertices overlap with both. \qed
\end{proof}

We can now solve~$\probshort$ on unit interval graphs with depth at most~$c + 1$ with a simple greedy algorithm. For each vertex~$v_i$ it checks if~$v_i$ is the first vertex of the bad pair in an FSG. If that is the case, then the algorithm adds the last vertex of the FSG to the solution set.

\begin{algorithm2e}[t]
    \caption{Interval graphs}\label{algorithm:1}
    \KwIn{A unit interval graph~$G$ with depth~$\leq c + 1$.}
    \KwOut{A set~$S$ such that~$G-S$ is~$c$-closed.}

    $S \gets \emptyset$\;
    $i \gets 1$\;

    \While{$i < n$}{
        \If{checkFSG($G, v_i$) = true}{
            $S \gets S \cup \{v_{i+c+1}\}$\;
            $i \gets i+c+2$\;
        }
        \Else{
            $i \gets i + 1$\;
        }
    }
    \Return $S$\;

    \BlankLine
    \SetKwFunction{FCheckFSG}{checkFSG}
    \SetKwProg{Fn}{Function}{:}{}
    \Fn{\FCheckFSG{$G, v_i$}}{
        $h \gets 1$\;
        \While{$h \leq c$}{
            \If{$\ell(v_{i+h}) \leq r(v_i)$ \textbf{and} $r(v_{i+h}) \geq \ell(v_{i+c+1})$}{
                $h \gets h + 1$\;
            }
            \Else{
                \Return false\;
            }
        }
        \Return true\;
    }
\end{algorithm2e}

\begin{theorem}\label{thm:unit-inter}
  \probname~can be solved in~$\Oh(n\cdot c)$~time when the input is a unit interval graph with maximal clique size~$c+1$.
\end{theorem}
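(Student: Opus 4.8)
The plan is to prove that Algorithm~\ref{algorithm:1} computes a \emph{minimum}-size vertex set $S$ with $G-S$ being $c$-closed; the decision problem is then answered by testing $|S|\le k$. We may assume the depth of $G$ equals $c+1$, since a depth-$\le c$ graph is already $c$-closed and $S=\emptyset$ works. The first step is to reduce the task to a hitting-set problem on a very restricted family of sets. By Lemma~\ref{lem:FSGstructure}, every FSG of $G$ has vertex set $\{v_i,\dots,v_{i+c+1}\}$, i.e.\ a block of $c+2$ consecutive vertices in the ordering, with bad pair $v_i,v_{i+c+1}$. The key equivalence to establish is that $G-S$ is $c$-closed if and only if $S$ intersects the vertex set of every FSG of $G$. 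The forward direction is immediate: if some FSG block $W$ avoided $S$, then all of $W$ survives, the adjacencies inside $W$ are unchanged, and the bad pair keeps its $c$ connecting vertices, so $G-S$ would still contain that FSG. For the converse, any FSG of $G-S$ is also an FSG of $G$, because deletion preserves nonadjacency of the bad pair and can only decrease common neighbourhoods; hence hitting all FSGs of $G$ leaves no FSG in $G-S$. Thus the problem is exactly to stab every length-$(c+2)$ FSG block with a minimum number of deleted vertices.

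Next I would verify that the subroutine \texttt{checkFSG}$(G,v_i)$ returns \emph{true} precisely when $\{v_i,\dots,v_{i+c+1}\}$ is an FSG. Its two tests guarantee, using the nondecreasing ordering of left endpoints together with the unit-length assumption, that each $v_{i+h}$ with $1\le h\le c$ overlaps both $v_i$ and $v_{i+c+1}$, so these are $c$ common neighbours of the pair. What is \emph{not} tested explicitly is nonadjacency of $v_i$ and $v_{i+c+1}$; this is where the depth bound is used. If the pair overlapped, then $v_i$, $v_{i+c+1}$, and all $c$ middle intervals would each contain the point $r(v_i)$ (each middle interval satisfies $\ell(v_{i+h})\le r(v_i)$ and, being unit-length with $\ell(v_{i+h})\ge \ell(v_i)$, also $r(v_{i+h})\ge r(v_i)$; overlap of the pair forces $\ell(v_{i+c+1})\le r(v_i)\le r(v_{i+c+1})$). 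That would place $c+2$ intervals through one point, contradicting depth $\le c+1$. Hence \texttt{checkFSG} detects exactly the FSG blocks, which by Lemma~\ref{lem:FSGstructure} are all the FSGs of $G$.

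It then remains to show optimality of the greedy and the running time. Since all FSG blocks have the same length $c+2$, the block with smallest right endpoint is the one with smallest starting index $i_0$, and every solution must delete a vertex of $\{v_{i_0},\dots,v_{i_0+c+1}\}$. An exchange argument gives an optimal solution deleting the rightmost such vertex $v_{i_0+c+1}$: replacing any chosen vertex $v_j$ of that block ($i_0\le j\le i_0+c+1$) by $v_{i_0+c+1}$ loses no coverage, because every FSG block has starting index $\ge i_0$, so any block containing $v_j$ has starting index in $[i_0,i_0+c+1]$ and therefore also contains $v_{i_0+c+1}$. Deleting $v_{i_0+c+1}$ stabs exactly the blocks with starting index in $[i_0,i_0+c+1]$, so the next unstabbed block starts at index $\ge i_0+c+2$; this is precisely the jump $i\gets i+c+2$, while $i\gets i+1$ scans for the next leftmost FSG. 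Iterating the exchange yields optimality. For the running time, $i$ increases monotonically from $1$ to $n$, \texttt{checkFSG} is invoked at most $n$ times and costs $\Oh(c)$ per call, for $\Oh(n\cdot c)$ overall; boundary indices with $i+c+1>n$ admit no FSG and return \emph{false}.

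The step I expect to require the most care is the correctness of \texttt{checkFSG} together with the reduction equivalence, rather than the greedy analysis (which is the textbook stabbing argument for equal-length intervals). In particular, the only place the hypothesis ``maximum clique size $c+1$'' (equivalently depth $\le c+1$) is genuinely needed is the nonadjacency argument above, which rules out the degenerate case where the detected block has an adjacent candidate bad pair.
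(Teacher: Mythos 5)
Your proposal is correct and follows essentially the same route as the paper: the same structural characterization of FSGs as blocks of $c+2$ consecutive vertices (Lemma~\ref{lem:FSGstructure}), the same use of the depth bound to deduce nonadjacency of the candidate bad pair inside \texttt{checkFSG}, and the same greedy scan with jump $i \gets i+c+2$. The only cosmetic difference is that you certify optimality by an exchange argument, whereas the paper notes that the $|S|$ blocks triggering deletions are pairwise disjoint and hence already force $|S|$ deletions; the two arguments are interchangeable.
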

\begin{proof}\label{proof:unit-inter}
  The pseudocode of the algorithm is shown in Algorithm~\ref{algorithm:1}.
  We show that this algorithm returns the smallest set~$S$ such that~$G - S$ is~$c$-closed.

  First we show that the function \texttt{checkFSG}($G, v_i$) correctly determines if~$v_i$ is the leftmost vertex of an FSG. The function goes through the next~$c$ vertices and checks if they overlap with~$v_i$ and~$v_{i + c + 1}$. If that is the case, then we do not need to check that~$v_i$ and~$v_{i + c + 1}$ do not overlap because the graph has depth~$\leq c + 1$ and them overlapping would imply a depth of at least~$c + 2$. This together with Lemma~\ref{lem:FSGstructure} means that the function is correct.
	
	Now let~$S$ be the solution returned by Algorithm~\ref{algorithm:1} for a~$\probshort$ instance~$(G,k)$ where~$G$ is a unit interval graph of depth~$\leq c + 1$. We show that~$G - S$ is~$c$-closed. Let us assume towards a contradiction that~$G - S$ contains an FSG~$H$ and let~$v_i$ be the leftmost vertex in~$H$. If the algorithm called \texttt{checkFSG}($G, v_i$), then it would have removed vertex~$v_{i + c + 1}$ and~$H$ would not be an FSG in~$G - S$. The only time the algorithm does not call \texttt{checkFSG}($G, v_i$) for some vertex is if line~$6$ was executed. For that to happen the function must have returned true for a vertex~$v_j$ with~$i - c - 1 \leq j < i$. However, that would mean that vertex~$v_{j + c + 1}$ is in~$S$ which leads to a contradiction since~$v_{j + c + 1}\in H$ means that~$H$ would not be an FSG.
	
	Finally, we show that~$S$ is the smallest set such that~$G - S$ is~$c$-closed. The algorithm skips to vertex~$v_{i + c + 2}$ after finding an FSG starting at~$v_i$ and adding~$v_{i + c + 1}$ to~$S$. This means that the~$|S|$ FSGs that lead the algorithm to adding a vertex to~$S$ do not overlap with eachother. This gives us a lower bound of~$|S|$ for the number of vertices that need to be removed since we always need to remove at least one vertex from each FSG. Since the algorithm finds a solution with exactly this lower bound we know that the solution must be optimal. \qed
\end{proof}

\paragraph{Neighborhood Diversity}

 The \emph{neighborhood diversity} of a graph, $\nd(G)$, is the number of \emph{neighborhood classes} of~$G$ which are the equivalence classes of the relation~$\sim$ with~$u\sim v:\Leftrightarrow N(u)\setminus \{v\} = N(v)\setminus \{u\}$. We first show that for each neighborhood class there are essentially only~$c + 1$ possibilities to consider.

\begin{lemma}\label{lem:nd_cases}
	Let~$S$ be a minimum-size solution for a~$\probshort$ instance~$(G,k)$ with~$c \geq 2$. Then for each neighborhood class~$D$ we either have~$D\setminus S = D$ or~$|D\setminus S| < c$.
\end{lemma}
\begin{proof}
  Assume towards a contradiction, that~$D\setminus S$ is a proper subset of~$D$ that contains at least~$c$ vertices and let~$v$ be a vertex of~$D\cap S$. Since~$S$ is a minimal solution,~$S\setminus \{v\}$ is not a solution, that is,~$G-(S\setminus \{v\})$ contains an FSG~$H$ one of whose vertices is~$v$. This FSG can contain at most~$c$ vertices of~$D$, since the two bad pair vertices have different neighborhoods than the connecting vertices.   Hence,~$v$ can be replaced in~$H$ by some vertex of~$D\setminus S$, giving an FSG~$H'$ which is contained in~$G-S$. This contradicts that~$S$ is a solution.
  \qed
\end{proof}

With this we can now obtain the following via branching.

\begin{theorem}\label{lem:nd+c_fpt}
	$\probshort$ with~$c \geq 2$ can be solved in~$(c + 1)^{\nd} \cdot n^{\Oh(1)}$ time.
\end{theorem}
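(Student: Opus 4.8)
The plan is to solve the problem by exhaustive branching over the neighborhood classes, using Lemma~\ref{lem:nd_cases} to bound the number of relevant choices per class by~$c+1$. Let~$D_1, \ldots, D_{\nd}$ denote the neighborhood classes of~$G$, which can be computed in polynomial time. By Lemma~\ref{lem:nd_cases}, a minimum-size solution~$S$ interacts with each class~$D_i$ in one of only~$c+1$ essentially different ways: either it deletes no vertex of~$D_i$ (so~$D_i \setminus S = D_i$), or it leaves exactly~$j$ vertices of~$D_i$ for some~$j \in \{0, 1, \ldots, c-1\}$. I would therefore branch over all~$(c+1)^{\nd}$ assignments that fix, for each class~$D_i$, one such option.

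The key observation that makes a single representative per option sufficient is that the vertices within a neighborhood class are interchangeable: by definition of~$\sim$, any two vertices~$u, v \in D_i$ have the same neighbors outside~$D_i$ and are either both adjacent or both nonadjacent to each other, so within~$D_i$ the vertices induce either a clique or an independent set. Hence, for a fixed number~$j$ of surviving vertices, every choice of which~$j$ vertices of~$D_i$ to retain yields the same graph up to isomorphism. In each branch I would thus construct the candidate graph~$G - S$ by retaining, in each class, either all vertices or an arbitrary set of~$j$ vertices as dictated by the branch, record the total number~$|S|$ of deleted vertices, and test in~$n^{\Oh(1)}$ time whether~$G - S$ is~$c$-closed (e.g., by checking each nonadjacent pair for at least~$c$ common neighbors). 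The algorithm accepts if and only if some branch produces a~$c$-closed graph with~$|S| \le k$.

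For correctness, in the forward direction any accepting branch exhibits a concrete deletion set of size at most~$k$ whose removal yields a~$c$-closed graph, so the instance is a yes-instance. For the reverse direction, if~$(G,k)$ is a yes-instance, fix a minimum-size solution~$S^*$ of size at most~$k$; by Lemma~\ref{lem:nd_cases} the restriction of~$S^*$ to each class matches one of the~$c+1$ options, and by interchangeability the branch making exactly these choices produces a graph isomorphic to the~$c$-closed graph~$G - S^*$ with deletion count~$|S^*| \le k$, so the algorithm accepts. The running time is~$(c+1)^{\nd} \cdot n^{\Oh(1)}$, since there are at most~$(c+1)^{\nd}$ branches and each is processed in polynomial time.

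I expect the main difficulty to be bookkeeping rather than conceptual: one must argue carefully that retaining~$j$ arbitrary vertices of a class is without loss of generality (the interchangeability argument) and handle the degenerate classes with~$|D_i| < c$, where the ``keep all'' option coincides with one of the ``keep~$j$ vertices'' options and some options are infeasible. These subtleties do not increase the branching factor beyond~$c+1$, so the claimed bound is safe.
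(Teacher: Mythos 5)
Your proposal is correct and follows essentially the same approach as the paper: both use Lemma~\ref{lem:nd_cases} to restrict each neighborhood class to~$c+1$ options (keep all, or keep~$j \in \{0,\ldots,c-1\}$ vertices), enumerate all~$(c+1)^{\nd}$ combinations, and check each resulting graph for~$c$-closure in polynomial time. Your write-up is in fact somewhat more explicit than the paper's about why retaining an arbitrary set of~$j$ vertices per class is without loss of generality.
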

\begin{proof}\label{proof:nd+c_fpt}
	Since the vertices in a neighborhood class have the same neighborhood it does not matter which vertices we specifically remove. It only matters how many of them we remove. According to Lemma~\ref{lem:nd_cases} there are only~$c + 1$ different amounts of vertices that need to be removed from each neighborhood class and there are only~$nd$ neighborhood classes. That gives us a total of~$(c + 1)^{nd}$ solutions that we need to check. To check if a possible solution~$S$ solves the problem we only need to check if~$|S| \leq k$ and if~$G - S$ is~$c$-closed. This can be done in time~$n^{\Oh(1)}$ by just checking the common neighborhood of each pair of vertices in~$G - S$ that is not connected. Performing this check for all~$(c + 1)^{nd}$ solutions gives us an algorithm with the desired running time. \qed
\end{proof}

Next we formulate the optimization version of~$\probshort$ where we want to find the smallest set~$S$ such that~$G - S$ is~$c$-closed as an ILP with~$\Oh(\nd)$ variables which implies fixed-parameter tractability with respect to~$\nd$ alone. The ILP relies mainly on the observation that we only care about the number of vertices that are removed from each neighborhood class and not about the specific vertices.
\begin{theorem}\label{lem:nd_fpt}
	$\probshort$ can be solved in~$f(\nd) \cdot n^{\Oh(1)}$ time.
\end{theorem}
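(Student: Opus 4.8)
The plan is to encode the optimisation version of $\probshort$ as an integer linear program whose number of variables depends only on $p:=\nd(G)$, and then to invoke Lenstra's theorem on integer linear programming in fixed dimension (solvable in $f(p)\cdot L^{\Oh(1)}$ time, where $L$ is the bit-length of the program; see~\cite{PA16}). First I would recall the standard structural facts about neighbourhood diversity: every neighbourhood class $D_i$ induces either a clique or an independent set, and for two distinct classes $D_i,D_j$ either all edges or no edges run between them. Because all vertices of a class are interchangeable by an automorphism, a deletion set $S$ is, up to isomorphism of $G-S$, completely described by the numbers $x_i:=|D_i\setminus S|$ of surviving vertices per class. I would therefore introduce one variable $x_i$ per class with $0\le x_i\le |D_i|$, minimise the number $\sum_i(|D_i|-x_i)$ of deleted vertices, and test whether the optimum is at most~$k$.

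The heart of the construction is to express "$G-S$ is $c$-closed'' as linear constraints in the $x_i$. For each class let $A_i$ be the set of indices of the classes completely joined to $D_i$. Classifying every nonadjacent pair of surviving vertices by its classes, the only bad pairs can occur (i) inside an independent-set class $D_i$, or (ii) between two distinct classes $D_i,D_j$ with no edge between them. In case (i) the two vertices share the neighbourhood $\bigcup_{\ell\in A_i}D_\ell$, so the bad-pair condition reads $\sum_{\ell\in A_i}x_\ell\le c-1$ and is relevant only when $x_i\ge 2$. In case (ii) neither $D_i$ nor $D_j$ contributes to the common neighbourhood, which is $\bigcup_{\ell\in A_i\cap A_j}D_\ell$, so the condition is $\sum_{\ell\in A_i\cap A_j}x_\ell\le c-1$, relevant only when $x_i\ge 1$ and $x_j\ge 1$. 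Crucially, all of these expressions are linear in the $x_\ell$, and the cross-class case drops both $i$ and $j$ automatically since nonadjacency gives $i\notin A_j$ and $j\notin A_i$.

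What remains is to turn these \emph{conditional} constraints into genuine ILP constraints. For each class I would add a binary indicator $y_i$ forced by $x_i\le |D_i|\,y_i$ and $y_i\le x_i$ to satisfy $y_i=1\Leftrightarrow x_i\ge 1$, and, for each independent-set class, an analogous binary $z_i$ with $z_i=1\Leftrightarrow x_i\ge 2$. The conditional inequalities are then activated with a standard big-$M$ term (take $M=n$, which dominates any sum of the $x_\ell$), for instance $\sum_{\ell\in A_i\cap A_j}x_\ell\le (c-1)+M(2-y_i-y_j)$ in case (ii) and $\sum_{\ell\in A_i}x_\ell\le (c-1)+M(1-z_i)$ in case (i). This keeps the number of variables at $\Oh(p)=\Oh(\nd)$, with $\Oh(p^2)$ constraints and coefficients bounded by $n$, so $L=\poly(n)$; Lenstra's algorithm then yields the claimed $f(\nd)\cdot n^{\Oh(1)}$ running time, with the polynomial preprocessing for the classes and the sets $A_i$ absorbed into the $n^{\Oh(1)}$ factor.

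The main obstacle I anticipate is the careful, case-complete derivation of the common-neighbour counts at the class level: one must argue precisely that clique versus independent-set classes, same-class versus cross-class pairs, and the exclusion of the two pair vertices themselves are all handled correctly, so that the linear expressions exactly capture $c$-closedness of $G-S$ and conversely. The within-class case and the adjacency bookkeeping of case (i) need the most caution, whereas the cross-class case is pleasantly clean. (As an alternative to the big-$M$ indicators, one could branch over the $3^{\nd}$ choices of whether each $x_i$ equals $0$, equals $1$, or is at least $2$ and solve an unconditional ILP in each branch, which is likewise fixed-parameter tractable in $\nd$ alone.)
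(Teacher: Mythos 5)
Your proposal is correct and follows essentially the same route as the paper: one integer variable per neighborhood class counting survivors, binary indicators for ``at least one'' and ``at least two'' surviving vertices, big-$M$-activated linear constraints bounding the common neighborhood of within-class (independent-set) and cross-class nonadjacent pairs, and Lenstra's theorem in fixed dimension. The only differences are cosmetic (minimizing deletions versus maximizing survivors, and the algebraic form of the big-$M$ constraints).
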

\begin{proof}\label{proof:nd_fpt}
Let~$\mathcal{D}$ be the set of all neighborhood classes in~$G$. We let~$[v]$ denote the neighborhood class that contains the vertex~$v$. We first introduce a variable~$x_D$ for each~$D\in \mathcal{D}$ that counts the number of vertices that remain in~$D$ after removing the vertices in the solution~$S$. Our goal is to maximize~$\sum_{D\in \mathcal{D}} x_D$.
To make sure that the solution set removes all FSGs from the graph we introduce some additional variables and constraints.
For each neighborhood class~$D\in \mathcal{D}$ we introduce the boolean variables~$x'_D$ and~$x''_D$. We want~$x'_D$ to be~$0$ if~$x_D$ is~$0$ and~$1$ if~$x_D > 0$. We want~$x''$ to be~$0$ if~$x_D \leq 1$ and~$1$ if~$x_D > 1$. For this we use a large constant~$M \gg n^2$ and introduce the constraints~$M\cdot x'_D \geq x_D$,~$x'_D \leq x_D$ and~$M\cdot x''_D \geq x_D$,~$x''_D \leq x_D - 1$ respectively for each~$D\in \mathcal{D}$.

We now add a constraint for each pair of vertices~$u, v$ that are not connected. For this we define the set~$NC(v)$ as the set of neighborhood classes that~$v$ is connected with. If~$[u] \not= [v]$, then we add the constraint
\begin{align*}
	M \cdot (x'_{[u]} + x'_{[v]}) + \left(\sum_{D\in NC(u) \cap NC(v)} x_D\right) - c < 2M \text{.}
\end{align*}
The sum counts the total number of common neighbors of~$u$ and~$v$. If either of the two classes~$[u]$ or~$[v]$ is empty then the constraint is automatically fulfilled since~$M$ is much larger than the maximum value of the sum. If~$[u]$ and~$[v]$ are both not empty, then the sum must be smaller than~$c$ for the constraint to be fulfilled.

If~$[u] = [v]$ and~$[u]$ is an independent set we instead add the constraint
\begin{align*}
	M \cdot x''_{[u]} + \left(\sum_{D\in NC(u)} x_D\right) - c < M \text{.}
\end{align*}
The sum again counts the common neighbors of~$u$ and~$v$. If~$[u]$ has at most one vertex, then the constraint is fulfilled because~$x''_{[u]}$ would be~$0$. In this case no bad pair can be formed. If~$[u]$ contains at least~$2$ vertices, then a bad pair could be formed and the sum must be smaller than~$c$ for the constraint to be fulfilled.

In total we have the following ILP,~where~$\mathcal{D}_I$ denotes the neighborhood classes that are independent sets.

\begin{align*}
  \max & \sum_{D \in \mathcal{D}} x_D\\
\text{s.t.}~~ &	M \cdot x'_D \geq x_D\geq x'_D && \forall D \in \mathcal{D},\\
    %& x'_D &\leq x_D && \forall D \in \mathcal{D},\\
    & M \cdot x''_D \geq x_D\geq x''_D+1 && \forall D \in \mathcal{D}, \\
    %& x''_D &\leq x_D - 1 && \forall D \in \mathcal{D},\\
    & M \cdot (x'_{[u]} + x'_{[v]}) + \Big (   \smashoperator[r]{\sum_{D\in NC(u) \cap NC(v)}} x_D \quad \,\, \Big ) - c < 2M && \forall [u] \not = [v] : \{u, v\}\notin E(G),  \\
    & M \cdot x''_{[u]} + \Big (\smashoperator[r]{\sum_{D\in NC(u)}} x_D\Big ) - c < M && \forall [u] \in \mathcal{D}_I  
\end{align*}

In our ILP formulation, the number of variables is~$3 \cdot \nd$ which leads to a running time of~$f(\nd) \cdot n^{\Oh(1)}$~\cite{L83}. \qed
\end{proof}

\section{Conclusion}
We have introduced \probname~and provided a first overview of its complexity. Several interesting questions are left open: First, it remains to settle the remaining five open cases for the complexity on bounded-degree graphs. Second, one could close the gap between the upper and lower bounds on the kernel size for the solution size parameter~$k$. Third, a complexity classification for interval graphs and unit interval graphs remains open. 
In addition, it is of interest to identify further problems which can be solved efficiently on almost $c$-closed graphs. Finally, we introduced the bad pairs-related parameter which could be useful for other vertex deletion problems where the forbidden subgraphs are large  but contain distinguished pairs of vertices. Identifying such problems and providing a generic definition of a bad pair-related parameter could be fruitful. 

\newpage

\bibliographystyle{plain}   %alphabetisch nach Autor
\bibliography{paper}

\end{document}